\documentclass[%
reprint,
 amsmath,amssymb,
 aps,
floatfix,
]{revtex4-2}

\usepackage{graphicx}
\usepackage{dcolumn}
\usepackage{bm}
\usepackage{hyperref}
\hypersetup{
    colorlinks=true,
    linkcolor=blue,
    filecolor=magenta,
    urlcolor=blue,
    citecolor=purple,
}

\usepackage{dsfont}
\usepackage{orcidlink}
\usepackage{sidecap}
\usepackage{fancyhdr}
\usepackage{fancyref}
\usepackage{xspace}
\usepackage[normalem]{ulem}
\usepackage{enumitem}
\usepackage{times}
\usepackage{booktabs}
\usepackage{tikz}
\usepackage[dvipsnames]{xcolor}
\usepackage{comment}
\usepackage{wrapfig}
\usepackage{amsthm}
\usepackage{algpseudocode}
\usepackage{caption}

\newtheorem{theorem}{Theorem}[section]
\newtheorem{lemma}[theorem]{Lemma}

\begin{document}

\preprint{APS/123-QED}

\title{Two Point Correlation Function Estimation with Contaminated Data}

\author{Arya~Farahi\orcidlink{0000-0003-0777-4618}}
 \email{arya.farahi@austin.utexas.edu}
\affiliation{%
 Department of Statistics and Data Sciences, University of Texas at Austin, Austin, Texas 78712, USA\\
 The NSF-Simons AI Institute for Cosmic Origins, USA 
}%

\date{\today}

\begin{abstract}
The two-point correlation function (2PCF) is a cornerstone of precision cosmology, yet its estimation from imaging surveys is vulnerable to contamination and incompleteness arising from imperfect target selection and pipeline-level inclusion decisions. In practice, the scientific target is a physically defined population (e.g., galaxies in a redshift or luminosity range), while the working catalog is constructed from noisy measurements and selection cuts, leading to mismatches between true and observed inclusion. These errors are rarely spatially uniform: they correlate with survey depth, observing conditions, and foreground structure, and can imprint spurious large-scale power or suppress the true clustering signal. High-resolution, high-SNR spectroscopic samples provide gold-standard inclusion in the target population, but are typically available for only a small subset of objects. We introduce a prediction-powered Landy–Szalay (PP–LS) estimator that combines noisy inclusion labels over the full catalog with exact labels on a small spectroscopic subset, while preserving the standard random-catalog normalization that corrects for survey geometry and selection. PP–LS debiases pair counts through residual-based, design-weighted correction terms computed only on the labeled subset, requiring no probability calibration, no known misclassification rates or class priors, no explicit spatial model of contamination, nor forward modeling of the systematics. Under simple random sampling of the labeled subset, we establish recovery of the oracle (true-label) Landy–Szalay pair counts and hence consistency for the target 2PCF. In controlled simulations with clustered and spatially structured contaminants, PP–LS removes the bias of na\"ive catalog-level estimators while achieving substantially lower variance than spectroscopic-only clustering. The result is a statistically principled, computationally lightweight estimator that integrates directly with standard pair-counting pipelines and supports robust clustering inference in next-generation surveys.
\end{abstract}

\keywords{Two-point correlation function, Large-scale structure, survey cosmology, data contamination, statistical method}
\maketitle

\tableofcontents

\section{Introduction}

The two-point correlation function (2PCF) has become a cornerstone of large-scale structure analysis in modern cosmology \citep{peebles2020large,weinberg2013observational}. By quantifying the excess probability of finding galaxies or other sources at a given separation relative to a random distribution, the 2PCF encodes information about primordial fluctuations, structure formation, and the physics of dark matter and dark energy \citep{totsuji1969correlation,de2000flat,estrada2009correlation,huterer2018dark}. Its measurement underlies constraints on the baryon acoustic oscillation (BAO) scale \citep{eisenstein2005detection,percival2010baryon,moon2023first,abbott2024dark,lodha2025desi}, redshift-space distortions \citep{ross20072df,vlah2019exploring}, and galaxy bias \citep{crocce2016galaxy,vakili2023clustering}. With upcoming wide-field surveys such as LSST \citep{chisari2019core}, \emph{Euclid} \citep{amendola2018cosmology}, and the Roman Space Telescope \citep{eifler2021cosmology}, achieving precise and unbiased 2PCF estimates will remain an outstanding challenge for robust cosmological inference.

The go-to method for estimating the 2PCF today is the Landy--Szalay (LS) estimator \citep{landy1993bias}, which improves upon earlier ratio/cross-count forms \citep{hamilton1993toward} and combines data-data, data-random, and random-random pair counts to reduce variance, selection, and edge effects. Random catalogs encode the survey geometry and selection, while per-object/pair weights (e.g., FKP) further improve estimation variance \citep{Feldman1994}. Classical alternatives include the original Peebles' estimator \citep{peebles2020large}, Davis-Peebles \citep{davis1983survey}, Hewett \citep{hewett1982estimation}, Rivolo \citep{rivolo1986two}, and Hamilton \citep{hamilton1993toward}, and other forms \citep{storey2021two}. Systematic comparisons generally find LS (and Hamilton) to be closest to optimal across realistic survey conditions \citep{kerscher2000comparison}. Beyond these binned pair-count estimators, least-squares formulations recast correlation estimation as a linear inference problem: one can interpolate or directly fit $\xi(r)$ at chosen separations with a design matrix built from pairwise kernels, yielding pointwise (or coefficient) estimates with well-defined covariance. Examples include Tessore's least-squares 2PCF \citep{tessore2018least}, which recovers standard estimators as special cases and enables interpolation without hard binning, and the continuous-function estimator of \citep{storey2021two}, which projects pair counts onto smooth basis functions and views LS as the tophat-basis limit \citep{storey2021two}. In practice, these least-squares/generalized estimators can reduce binning artifacts, provide smoother $\xi(r)$ representations, and deliver direct parameter fits (e.g., BAO scale) while retaining the random-catalog normalization that controls edges and selection.

A recurring practical challenge in 2PCF analysis is the reliable construction of a catalog representing the intended target population. In idealized settings, such as high-resolution spectroscopic surveys with unambiguous line identification, membership in the target class can often be determined with high fidelity. In many contemporary surveys, however, target selection relies on imperfect measurements and automated pipelines. Imaging artifacts, flux uncertainties, redshift misestimation, and object-class confusion (e.g., stars or quasars misidentified as galaxies) can all induce misclassification of the intended population \citep{kong2026imaging,davis2023hetdex,johnston2021organised,nicola2020tomographic,crocce2016galaxy,fadely2012star,kim2016star,drlica2018dark}. As a result, catalog membership is frequently represented by \emph{noisy inclusion labels} \citep{crocce2016galaxy}, rather than by error-free indicators of truth-level class membership. Noisy labels introduce two coupled issues: \emph{contamination} (e.g., non-galaxies included) and \emph{incompleteness} (e.g., true galaxies excluded). Both modify the effective selection function and can bias clustering measurements \citep{awan2020angular,farrow2021correcting}, smearing cosmological signals \citep{chaves2018effect}.

In wide-area surveys, contamination and incompleteness arise from multiple sources, including star-galaxy classification error, photometric-redshift failures, and related selection effects. These imperfections are seldom spatially uniform. Stellar density changes with Galactic latitude, extinction traces the structure of Galactic dust, and survey depth depends on observing conditions such as seeing, airmass, and sky brightness \citep{kong2026imaging}. As a consequence, misclassification rates and the associated contamination and incompleteness vary coherently across the sky. This spatial structure propagates directly into clustering measurements: elevated stellar contamination in dense stellar fields can introduce spurious large-scale power, whereas incompleteness correlated with shallow depth or degraded image quality can attenuate the true clustering amplitude. Simple global purity corrections cannot remove these position-dependent effects; robust inference instead requires methods that explicitly model or correct for spatially varying, noisy labels \citep{rodriguez2022dark}.

A range of strategies has been developed to mitigate these sources of estimation bias. Template regression and mode deprojection can downweight known systematic contamination biases \citep{leistedt2014exploiting}. At the object level, reweighting and veto masks are used in BOSS and DES analyses to excise or correct problematic regions \citep{ross2012clustering,elvin2018dark}. Masking and aggressive quality cuts, however, reduce statistical power and may introduce selection effects of their own. Regression against survey-property templates depends on the availability and fidelity of those templates, which is difficult to model and verify, and cannot remove clustering intrinsic to contaminants \citep{rodriguez2022dark}. Weighting objects by $p(\mathrm{galaxy}\mid\text{features})$ can lower variance, but unbiasedness requires perfectly calibrated probabilities and no spatial structure in the residual errors, assumptions that are seldom met. Spectroscopic calibration via cross-comparisons with photometric samples can help \citep{newman2008calibrating}, yet such approaches typically rely on strong assumptions about selection probabilities, incompleteness, and impurity rates.

An alternative is forward modeling: explicitly include contamination and incompleteness in the theoretical prediction and fit to the data \citep{crocce2016galaxy}. While principled, forward modeling can be computationally expensive and requires accurate models of multiple systematics, which is often infeasible in practice. A shared limitation of these strategies is their reliance on strong, often unverifiable or difficult-to-verify assumptions (e.g., calibrated probabilities, known completeness/purity, stationary error models, or correctly specified systematic templates). The effort required to validate and maintain these assumptions, through extensive calibration, cross-matching, and null tests, can exceed the primary analysis, and even then, residual sensitivity to mismodeling often remains.

These limitations motivate an assumption-lean framework that (i) preserves the geometric and edge-correction advantages of LS-style estimators, (ii) accepts arbitrary noisy classifiers without assuming calibration, known contamination details, or even origins of contamination, and (iii) fully exploits a small, high-fidelity spectroscopic sample available in most surveys. Our guiding principle is to incorporate the correction directly into the estimator itself, rather than relying on explicit forward modeling or post-hoc adjustments and calibration. The procedure does not require prior identification of specific systematics, explicit probability calibration, global purity factors, hand-crafted systematics templates, or computationally intensive forward simulations. Instead, the estimator is constructed so that bias arising from contamination and incompleteness is removed internally through its design, yielding a self-contained and design-consistent correction mechanism. The only requirement is a small spectroscopic gold-standard sample. 

We propose a \emph{prediction-powered} Landy-Szalay (PP-LS) estimator, adapting prediction-powered inference (PPI) \citep{angelopoulos2023ppi} to the pairwise U-statistics that define the 2PCF. PPI offers a general recipe for combining noisy source classification with a small labeled set to obtain valid population-level inference. Our construction starts with a plug-in LS estimator that uses noisy labels (hard or soft) and then adds residual-based correction terms computed only on a small spectroscopic subset. By upweighting these residuals with standard survey-sampling ideas, we recover unbiased estimates of the true pair counts, regardless of classifier calibration, accuracy, or spatial structure, while remaining fully compatible with random catalogs and the standard LS geometry corrections. Notably, the approach requires only a handful of additional weighted pair counts beyond LS, making it computationally lightweight and easy to integrate into existing codes such as \texttt{TreeCorr} and \texttt{Corrfunc}.

\begin{figure*}[th!]
    \centering
    \includegraphics[width=1\linewidth]{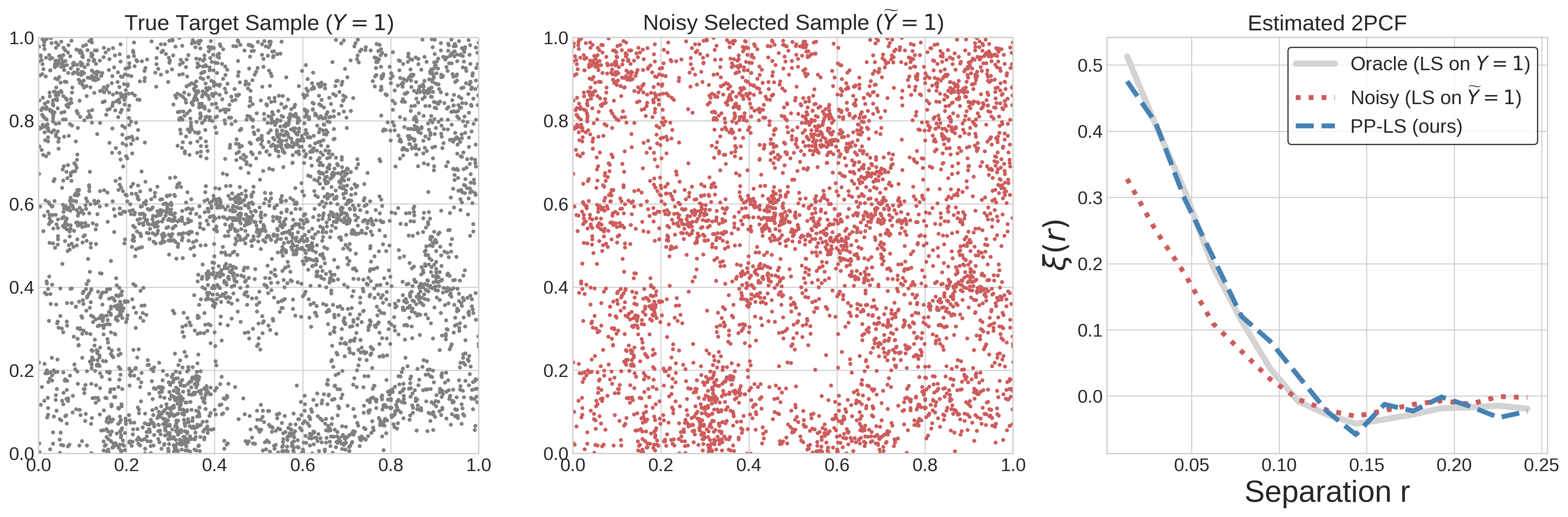}
    \caption{{\bf Left.} Simulated true target sources drawn from a clustered Thomas process in a unit square. {\bf Middle.} Objects with $\widetilde{Y}=1$, mimicking noisy source catalog with 30\% contamination (false positive). Spatially varying contamination produces a hotspot and gradient that biases pair counts if uncorrected. {\bf Right.} Estimated 2PCF $\xi(r)$. Oracle (LS on $Y$), Noisy (LS on $\widetilde{Y}$), and PP--LS method. Noisy LS is biased; PP--LS aligns with the oracle by using residual corrections from a small spectroscopic subset.}
  \label{fig:toy_example}
\end{figure*}

\paragraph*{Illustrative Example (Figure~\ref{fig:toy_example}).} To illustrate the problem and our solution, we simulate a two-dimensional field in a unit square. Galaxy positions are drawn from a clustered Thomas process. Contaminants follow an inhomogeneous process with a large-scale gradient plus a hotspot (see Figure~\ref{fig:contamination_error_fields}). We generate a noisy label $\widetilde{Y}\in \{0, 1\}$ with spatially varying false-positive/negative rates (higher near the hotspot and within a shallow-depth stripe), yielding a contamination of $\sim$30\% overall. A small spectroscopic subset (10\% of data) provides gold labels $Y$. The LS estimator built on $\widetilde{Y}$ is biased high at small separations due to clustered contaminants and biased on large scales by the gradient. Our prediction-powered LS (PP-LS) corrects these biases by adding residual-based pair-count adjustments estimated only on the spectroscopic subset, closely tracking the oracle (true-label) LS across scales. 

The remainder of this paper presents the method and benchmarks it. Sec.~\ref{sec:prereq} reviews prerequisites and introduces the two-point correlation function. Sec.~\ref{sec:estimator} formalizes the problem and derives our proposed estimator. In Sec.~\ref{sec:benchamrk}, we describe the benchmarking and evaluation procedure; then, we illustrate the proposed estimator's performance and benchmark it against alternative decontamination methods in Sec.~\ref{subsec:results}. Finally, we conclude in Sec.~\ref{sec:conclusion}.

\section{Prerequisites and Definitions}
\label{sec:prereq}

Our target throughout is 2PCF of astrophysical objects with a well-defined selection. In clustering analyses, the quantity of interest is rarely the full detection catalog. Instead, one aims to measure 2PCF of a \emph{physically defined population}; for example, galaxies in a given redshift interval, objects above a luminosity threshold, emission-line sources of a particular type, or any other class specified by intrinsic properties. In practice, however, catalog membership is determined from noisy measurements and imperfect pipelines. The central distinction in this work is therefore between \emph{true inclusion in the target population} and \emph{observed inclusion in the working catalog used for pair counting}.

\subsection{Target Population and Inclusion Indicator} \label{sec:objects_def}

\paragraph*{True inclusion label $Y$.}
Let $\mathcal{T}$ denote a fixed target selection rule defined on the \emph{true} (latent) physical properties of an object. For each detected object $i\in\{1,\dots,n\}$ at position $s_i$, define the \emph{true inclusion indicator}
\begin{equation}
Y_i \in \{0,1\},
\end{equation}
where 
\begin{equation*}
Y_i =
\begin{cases}
1 & \text{if object } i \text{ satisfies the target rule } \mathcal{T}, \\
0 & \text{otherwise.}
\end{cases} 
\end{equation*}

The point process whose clustering we seek to estimate is
\[
G := \{ s_i : Y_i = 1 \}.
\]
All occurrences of $\xi$ in this paper refer to the 2PCF of this truth-level process $G$, that is, correlations among pairs with $Y_i=Y_j=1$. We note that $Y_i$ reflects an ideal inclusion decision based on true physical quantities, irrespective of how the observed catalog was constructed.

\paragraph*{Noisy inclusion label $\widetilde Y$.}
In practice, in (imaging) surveys,  $Y_i$ is unknown for the vast majority of objects. Instead, the analyst works with a catalog generated from measured fluxes, estimated redshifts, classification pipelines, and quality cuts. We represent the resulting working-sample membership by a \emph{noisy inclusion indicator}
\begin{equation}
\widetilde Y_i \in [0,1].
\end{equation}
This variable indicates whether, and to what extent, object $i$ is included in the sample used for pair counting:
\begin{itemize}
\item In the \emph{hard-selection} case, $\widetilde Y_i \in \{0,1\}$ indicates binary inclusion or exclusion by the pipeline.
\item In the \emph{soft-selection} case, $\widetilde Y_i \in [0,1]$ may represent a model score or estimated probability of inclusion, which reflects the uncertainty of the label. 
\end{itemize}
Even if $\widetilde Y_i$ is probabilistic and approximately calibrated \citep{vashistha2025mathcali}, replacing $Y_i$ by $\widetilde Y_i$ in the Landy--Szalay estimator generally yields a biased estimate of $\xi$, particularly when misclassification errors are spatially structured. Throughout the remainder of this work, we refer to $Y$ and $\widetilde {Y} $ as true and noisy labels, respectively. 

\paragraph*{Examples.} The abstract definition of $Y$ encompasses common cosmological use cases:
\begin{itemize}
\item \textbf{Tomographic bin selection.} Suppose the scientific goal is to measure the 2PCF of galaxies with true redshift $z\in[z_{\min},z_{\max}]$ and true luminosity $L\ge L_{\min}$ \citep{weaverdyck2026dark}. Then $Y_i=1$ if object $i$ satisfies these conditions based on its true redshift and intrinsic luminosity.
\item \textbf{Emission-line survey.} In a survey targeting Lyman-$\alpha$ emitters \citep{davis2023hetdex}, $Y_i=1$ if object $i$ is truly an LAE within the desired redshift range, as determined by its intrinsic line properties.
\item \textbf{General population definition.} More broadly, $Y_i$ may encode membership in any physically defined class whose clustering is of interest.
\end{itemize}

Discrepancies between $Y_i$ and $\widetilde Y_i$ arise from multiple sources:
\begin{itemize}
\item \textbf{Interloper contamination.} Quasars or stars misclassified as galaxies may satisfy pipeline cuts, yielding $\widetilde Y_i=1$ even though $Y_i=0$.
\item \textbf{Scatter across selection thresholds.} Measurement noise in luminosity or flux can move objects across selection boundaries, producing both false positives ($Y_i=0$, $\widetilde Y_i=1$) and false negatives ($Y_i=1$, $\widetilde Y_i=0$).
\item \textbf{Redshift assignment errors.} Photometric redshift errors or low-resolution spectroscopic failures can place objects in the wrong tomographic bin, again leading to $Y_i\neq\widetilde Y_i$ \citep{ma2006effects}.
\item \textbf{Spatially correlated systematics.} Imaging artifacts, depth variations, or environment-dependent redshift failures can induce spatial patterns in $\widetilde Y$ errors, complicating standard debiasing approaches.
\end{itemize}

In galaxy imaging surveys, photometric redshift errors provide the dominant mechanism by which $\widetilde Y_i$ may differ from $Y_i$, particularly in tomographic analyses where bin membership is defined by redshift intervals \citep{ma2006effects,zhang2025forecasting}. However, within the formalism developed here, photo-$z$ scatter is only one instance of a broader class of misclassification error: a noisy inclusion indicator $\widetilde Y_i$ that may deviate from the truth-level label $Y_i$ through measurement error, classification uncertainty, or pipeline decisions, potentially in a spatially correlated manner. We formulate the estimator entirely in terms of the pair $(Y_i,\widetilde Y_i)$, without specifying a generative or forward model for photometric redshifts or any other contamination mechanism. Any process that moves objects across the boundary of the target set $G$ (for example, redshift misestimation that shifts galaxies across tomographic bin edges, luminosity scatter across thresholds, or low-resolution spectroscopic failures) manifests mathematically as a mismatch between $Y_i$ and $\widetilde Y_i$. Our proposed estimator operates directly on this mismatch and, therefore, applies uniformly across these scenarios without requiring explicit modeling of the underlying error source.

Finally, throughout this work, we implicitly take the target population to be galaxies. However, the formalism is entirely general and applies to any class of astrophysical objects defined by a binary inclusion rule (e.g., quasars, galaxy clusters, stars, star clusters, or compact-object mergers). For concreteness and notational simplicity, we present the methodology in the context of galaxies.

\subsection{Two-point Correlation Function}

Recall $G=\{s_i: Y_i=1\}$ denote the (unknown) set of sources satisfies the target rule $\mathcal{T}$ drawn from a point process on a domain $\mathcal{S}\subseteq\mathbb{R}^3$ (for 3D analyses) or $\mathcal{S}=\mathbb{S}^2$ (for angular analyses). Define the true galaxy counting measure
\begin{equation}
N_G(A) = \sum_{i=1}^n Y_i\,\mathds{1}\{s_i\in A\}, \qquad A\subseteq \mathcal{S},
\end{equation}
with (first) intensity $\lambda_G(s)$ satisfying $\mathbb{E}[N_G(ds)]=\lambda_G(s)\,ds$.
The \emph{two-point correlation function} $\xi$ quantifies the excess probability of finding a galaxy pair at a given separation. 

Let $d(\cdot,\cdot)$ denote the separation metric. For 3D comoving, $\xi(r)$, if $s_i$ includes comoving coordinates (e.g., from redshifts, distances), we have
\begin{equation}
d_{ij} = \|s_i - s_j\| \in \mathbb{R}_+.
\end{equation}
Let $w(\theta)$ be angular 2PCF on the sphere. For unit vectors $\hat s_i$ and $\hat s_j$, we have
\begin{equation}
d_{ij} = d(s_i,s_j) = \arccos(\hat s_i\cdot\hat s_j) \in [0,\pi].
\end{equation}

\paragraph*{3D definition (pair probability).}
For distinct volume elements $ds_1,ds_2$ separated by $r=\|s_1-s_2\|$,
\begin{equation}
\label{eq:xi-prob}
\mathbb{E}\!\left[dN_G(s_1)\,dN_G(s_2)\right]
=
\lambda_G(s_1)\lambda_G(s_2)\,\bigl[1+\xi(r)\bigr]\,ds_1\,ds_2.
\end{equation}
Under (approximate) homogeneity and isotropy on the scales of interest, $\xi$ depends only on separation $r$. In Equation~\eqref{eq:xi-prob}, $g(r):=1+\xi(r)$ is the pair-correlation function of the point process \citep{satoh2003introduction}.

\paragraph*{3D definition (overdensity field).}
Let $n_G(s)$ be the galaxy number density and $\bar n$ a reference mean (e.g., the selection-averaged mean inside the survey window). Define the overdensity $\delta_G(s) := \tfrac{n_G(s)-\bar n}{\bar n}$. Then
\begin{equation}
\label{eq:xi-cov}
\xi(r) = \big\langle \delta_G(s)\,\delta_G(s+r)\big\rangle,
\end{equation}
where $\langle\cdot\rangle$ denotes an ensemble (or an ergodic spatial) average.

\paragraph*{Angular correlation and projections.}
For surface density $\Sigma(\hat{\boldsymbol{s}})$ on the sphere and surface overdensity  $\delta_\Omega(\hat{\boldsymbol{s}}):=\tfrac{\Sigma(\hat{\boldsymbol{s}})-\bar\Sigma}{\bar\Sigma}$, we have
\begin{equation}
w(\theta) = \big\langle \delta_\Omega(\hat{\boldsymbol{s}})\,\delta_\Omega(\hat{\boldsymbol{s}}')\big\rangle,
\end{equation}
where $\theta=\arccos(\hat{\boldsymbol{s}}\!\cdot\!\hat{\boldsymbol{s}}')$. Projected statistics such as $w_p(r_p)$ are related to $\xi$ by line-of-sight integration
\begin{equation}
w_p(r_p) = 2\!\int_{0}^{\pi_{\max}}\!\!\xi\!\Big(\sqrt{r_p^2+r_{\perp}^2}\,\Big)\,dr_{\perp},
\end{equation}
with $r_p$ the transverse separation and $r_{\perp}$ the line-of-sight separation. For notation convenience, we focus on $\xi$, but the results generalize to other 2PCFs as well. 

\paragraph*{Mask, selection, and the bin-averaged target.}
Empirically, analyses are performed within a finite survey window $W:\mathcal{S}\to\{0,1\}$ (or $W\in[0,1]$ for fractional completeness), with a possibly varying selection $\lambda_G(s)$.
Given a separation metric $d(\cdot,\cdot)$ and a bin kernel $K_b$ (see Sec.~\ref{sec:kernel_def} for definition), the \emph{bin-averaged} correlation we target is
\begin{widetext}
\begin{equation}
\label{eq:xi-bin-avg}
\xi_b
:=
\frac{\displaystyle \iint_{\mathcal{S}\times\mathcal{S}}
W(s_1)W(s_2)\,\lambda_G(s_1)\lambda_G(s_2)\,
\xi\!\big(d(s_1,s_2)\big)\,K_b\!\big(d(s_1,s_2)\big)\,ds_1\,ds_2}
{\displaystyle \iint_{\mathcal{S}\times\mathcal{S}}
W(s_1)W(s_2)\,\lambda_G(s_1)\lambda_G(s_2)\,
K_b\!\big(d(s_1,s_2)\big)\,ds_1\,ds_2}.
\end{equation}
\end{widetext}
In practice, the denominator is estimated by random catalogs matching $W$ and $\lambda_G$, and the same kernel $K_b$ is used for data-data, data-random, and random-random to preserve the Landy--Szalay identity (see Sec.~\ref{app:LS_derivation}).

We recall $\xi_b$ is a kernel- (or bin-)averaged version of the point-wise correlation $\xi(r)$. $\xi_b \approx \xi(r)$ whenever $\xi$ varies slowly across the bin support of $K_b$ (e.g., for narrow bins evaluated $r$). Moreover, in the idealized narrow-bin limit where the kernel collapses to a Dirac delta shell at separation $r$,
$$
K_b \big(d(s_1,s_2)\big) \to \delta_D \big(r-d(s_1,s_2)\big),
$$
and under the usual assumption that $\xi$ depends only on separation (statistical homogeneity and isotropy), the ratio in Equation~\eqref{eq:xi-bin-avg} reduces exactly to $\xi_b=\xi(r)$: the common window/selection factors in numerator and denominator cancel, while $\xi\big(d(s_1,s_2)\big)$ is constant on the $\delta_D$-selected shell $d(s_1,s_2)=r$. For the remainder of this work, we use the notation $\xi$ exclusively to denote 2PCF, and reserve $w$ for the weight function, which will be defined in a subsequent section.

\section{Two-Point Correlation Function Estimator}
\label{sec:estimator}

This section formalizes the estimation problem for 2PCF in the presence of noisy inclusion labels. We precisely define the truth-level population whose clustering is the scientific target, introduce the oracle LS estimator that would be available if all true labels were observed, and then characterize the practical setting in which only noisy labels are available for the full sample while exact labels are known for a small random subset. We develop a prediction-powered correction that reconstructs the oracle pair counts using residual information from the labeled subset, establish its design-based properties under simple random labeling, and describe its implementation with standard weighted pair-counting routines. 

\subsection{Problem Setup and Assumptions}
\label{sec:setup}

We briefly recall the notation for inclusion labels and specify the statistical object to be estimated. The scientific target is the 2PCF of the truth-level population
$G = \{ s_i : Y_i = 1 \}$, where $Y_i\in\{0,1\}$ is the true inclusion indicator defined by a fixed physical selection rule $\mathcal{T}$ (see Sec.~\ref{sec:objects_def}). Throughout, $\xi$ denotes the 2PCF of this process, i.e., correlations among pairs with $Y_i=Y_j=1$.

\subsubsection*{Source Catalog}

Suppose we observe $n$ detected objects indexed by $i=1,\dots,n$, each located at position $s_i$ (either on the sky or in three-dimensional comoving space). Let
\begin{equation}
S := \{s_i : i=1,\dots,n\}    
\end{equation}
denote the full set of detected objects within the survey footprint. 

Recall $\mathcal{T}$ is a fixed selection rule defined on the true (latent) physical properties of objects and $Y_i\in\{0,1\}$ denote the corresponding true inclusion indicator. The truth-level target population is then defined mathematically as
\begin{equation}
G :=  \{ s_i \in S : Y_i = 1 \}.    
\end{equation}
In practice, however, $Y_i$ is unknown for most objects. The working catalog used for clustering analysis is typically constructed from noisy measurements and selection criteria applied at the pipeline level (see Sec.~\ref{sec:objects_def}). Let
\begin{equation}
\widetilde G := \{ s_i \in S : \widetilde Y_i = 1 \}    
\end{equation}
denote the observed inclusion set induced by the noisy labels $\widetilde Y_i$. The standard LS estimator applied directly to $\widetilde Y$ targets the clustering of the observed catalog $\widetilde G$ rather than the truth-level process $G$, and is generally biased for $\xi$ whenever $\widetilde Y_i \neq Y_i$ for a non-negligible fraction of objects.

The discrepancy between $G$ and $\widetilde G$ can be decomposed into two components: a contamination set 
\begin{equation}
C := \{ s_i \in A : \widetilde Y_i = 1,\, Y_i = 0 \},    
\end{equation}
consisting of objects that are included in the analysis but do not belong to the target population, and a missed set 
\begin{equation}
M := \{ s_i \in A : \widetilde Y_i = 0,\, Y_i = 1 \},    
\end{equation}
consisting of true target objects that are excluded. The estimator developed below corrects the pair counts so that the resulting 2PCF targets $G$, even though the observed catalog $\widetilde G$ may differ from $G$ through contamination and incompleteness.

\subsubsection*{Random Catalog}

Recall $W:\mathcal{S}\to\{0,1\}$ denote the survey window (or, more generally, $W\in[0,1]$ for fractional completeness), and let $\lambda_{\mathrm{obs}}(s)$ denote the effective selection function of the detected source catalog $S=\{s_i\}_{i=1}^n$, incorporating angular mask, depth variations, radial selection, and other large-scale completeness effects. The observed detections may be viewed as a realization of an inhomogeneous point process on $\mathcal{S}$ with intensity proportional to $\lambda_{\mathrm{obs}}(s)W(s)$, modulated by clustering.

A random catalog is a synthetic point set
\begin{equation}
R=\{s_a^R\}_{a=1}^{n_R},    
\end{equation}
whose locations are drawn independently from a distribution with density proportional to $\lambda_{\mathrm{obs}}(s)W(s)$, but without clustering (i.e., as an inhomogeneous Poisson process with that intensity). In particular, for measurable $A\subseteq\mathcal{S}$,
\[
\mathbb{E}[N_R(A)]
=
n_R \,
\frac{\int_A \lambda_{\mathrm{obs}}(s)W(s)\,ds}
{\int_{\mathcal{S}} \lambda_{\mathrm{obs}}(s)W(s)\,ds},
\]
where $N_R(A)$ denotes the number of random points falling in $A$.

The role of $R$ in the LS-style estimator construction is to provide an empirical representation of the separable baseline intensity $\lambda_{\mathrm{obs}}(s)W(s)$, thereby removing leading-order effects of survey geometry and large-scale selection gradients. 

The prediction-powered correction developed in Sec.~\ref{sec:pp_LS} operates strictly at the level of inclusion labels, replacing $Y_i$ by design-unbiased estimators based on $(\widetilde Y_i,Y_i)$. It does not modify the role of the random catalog. In particular, even if inclusion misidentification were perfectly corrected, an incorrect specification of $\lambda_{\mathrm{obs}}(s)W(s)$ in the random catalog would induce bias in $\widehat{\xi}(b)$ through mismodeling of survey geometry or baseline selection. Thus, our proposed estimator should be understood as augmenting, rather than replacing, the standard LS requirement that the random catalog faithfully represent the observational selection of the source catalog $S$.

\subsubsection*{Labeled subset}

We assume access to a labeled index set 
\begin{equation}
L \subset \{1,\dots,n\},  \qquad |L| = m \ll n,    
\end{equation}
for which the true inclusion labels $\{Y_i : i \in L\}$ are observed without error (e.g., via high-resolution, high-fidelity spectroscopic confirmation). 

We model $L$ as a \emph{simple random sample without replacement} from the source catalog indices $\{1,\dots,n\}$. That is, conditional on the realized catalog $\{(s_i,\widetilde Y_i,Y_i)\}_{i=1}^n$, the labeled set $L$ is uniformly distributed over all subsets of size $m$. Equivalently, for any $i\neq j$,
\[
\Pr(i \in L) = \frac{m}{n},
\qquad
\Pr(i,j \in L) = \frac{m(m-1)}{n(n-1)},
\]
and these inclusion probabilities do not depend on $s_i$, $Y_i$, $\widetilde Y_i$, or any other object-level characteristics. Expectations in the theoretical analysis are taken with respect to this sampling design. This design-based assumption implies that the labeled subset is statistically representative of the full source catalog $S$ in the sense that, for any fixed function $g(i)$,
\begin{equation}
\mathbb{E}_L\!\left[\frac{n}{m}\sum_{i\in L} g(i)\right] = \sum_{i=1}^n g(i),    
\end{equation}
and similarly for pairwise functions with the appropriate finite-population correction. 

The simple-random-sampling assumption is fundamental for the unbiasedness results established below and a high-fidelity spectroscopist sample must satisfy this condition. It guarantees that inclusion in $L$ is independent of spatial position and object properties, so that the labeled objects provide an unbiased view of the discrepancy between $Y$ and $\widetilde Y$ across the survey footprint. If, instead, the labeled subset were preferentially drawn from specific regions, magnitudes, environments, or redshift ranges, then the sampling mechanism would be correlated with $s_i$ or with $(Y_i,\widetilde Y_i)$. In that case, the Horvitz--Thompson scaling used in the estimator would require modification to incorporate unequal inclusion probabilities or an explicit model for the labeling design.

On labeled objects, define the observed residual
\begin{equation}
\Delta_i := Y_i - \widetilde Y_i, \qquad i \in L.
\end{equation}
The residual $\Delta_i$ measures the deviation between truth-level and catalog-level inclusion for object $i$. The key idea of the prediction-powered construction is to use empirical averages of these residuals, together with design-based scaling, to reconstruct the pair counts that would have been obtained had all $Y_i$ been observed.

\subsection{Binning Kernels $K_b$} \label{sec:kernel_def}

We consider a finite set of separation bins indexed by $b\in\mathcal{B}$. For each bin, we define a \emph{kernel}
\begin{equation}
K_b:\mathbb{R}_+\to\mathbb{R}_+,\qquad r\mapsto K_b(r),
\end{equation}
that assigns weight to a pair with separation $r=d_{ij}$.  Here, $K_b(\cdot)$ denotes the kernel (bin) indicator for separation bin $b$. To recover standard histogram binning, choose a \emph{top-hat} kernel
\begin{equation}
\label{eq:tophat}
K_b^{\text{TH}}(r) := \mathds{1}\{r\in [r_{b,\min},\, r_{b,\max})\}.
\end{equation}
Other smooth choices (sometimes advantageous for variance reduction or differentiability) include:
\begin{align}
\text{Triangular:}\quad
&K_b^{\triangle}(r) := \max\!\left(1-\frac{|r-c_b|}{h_b},\,0\right),\\
\text{Epanechnikov:}\quad
&K_b^{\text{Epa}}(r) := \max\!\left(1-\Big(\tfrac{r-c_b}{h_b}\Big)^2,\,0\right),\\
\text{Gaussian:}\quad
&K_b^{\mathcal{N}}(r) := \exp\!\left(-\frac{(r-c_b)^2}{2h_b^2}\right).
\end{align}
Here $c_b$ denotes a bin center and $h_b$ a half-width (or bandwidth); for $w(\theta)$ one sets $r\mapsto\theta$. Choice of bandwidth $h_b$ controls the bias-variance trade-off. The same kernel must be used for data-data ($DD$), data-random ($DR$), and random-random ($RR$) counts to preserve the Landy--Szalay identity.

For $w(\theta)$, a common choice of agular bin is logarithmically spaced edges $\{\theta_{b,\min},\theta_{b,\max}\}$ with $K_b^{\text{TH}}(\theta)$ from Eq.~\eqref{eq:tophat}. For $w_p(r_p)$, one may adopt a top-hat in $r_p$ and a top-hat in line-of-sight $r_{\perp}$ with $|r_{\perp}|\le r_{\perp, \max}$, i.e.\ $K_b(r_p, r_{\perp})=\mathds{1}\{r_p\in b\}\mathds{1}\{|r_{\perp}|\le r_{\perp, \max}\}$, and then sum over $r_{\perp}$. 

In our experiments, we adopt a simple top-hat filter, primarily for interpretability and consistency with standard practice. The proposed estimator, however, is not restricted to this choice and extends directly to alternative kernel functions \citep{storey2021two}. Exploring how the filter shape and bandwidth can be optimized as a function of survey characteristics (e.g., number density, footprint geometry, and expected clustering scale) could be a promising direction for future work. In particular, adaptive or smooth kernels may offer improved bias-variance trade-offs in regimes where shot noise, masking, or label sparsity become limiting factors.

\subsection{Pair Weights and Random Catalogs}

Let $w_{ij}\ge 0$ denote symmetric pair weights for data--data pairs. These may encode, for example, FKP weights \citep{Feldman1994}, imaging-systematics weights, or products of per-object weights \citep{pearson2016optimal}. In an FKP-like scheme, if $n(\mathbf{x})$ denotes the expected number density and $P_0$ is a fiducial power-spectrum amplitude, one may define per-object weights $u_i \propto \bigl[1 + n(\mathbf{x}_i) P_0\bigr]^{-1}$ and then $w_{ij} = u_i u_j$.  Analogous weights $w^R_{ia}$ and $w^R_{ab}$ are used for data--random and random--random pairs. We retain explicit weights throughout the theoretical development so that the results apply to arbitrary non-negative weighting schemes.

In the numerical experiments presented later, we set all weights identically equal to unity in order to isolate the impact of label misidentification. The theoretical formulation, however, accommodates general weighting schemes without modification.

\subsection{Target Counts and Landy--Szalay}

Allow general symmetric pair weights, with $w_{ij} = w_{ji}$ for $DD$ pairs, $w^{R}_{ia}$ for $DR$ pairs, and $w^{R}_{ab} = w^{R}_{ba}$ for $RR$ pairs. 

Define the \emph{label-aware denominators}
\begin{align}
S_2 &:= \sum_{1\le i<j\le n} w_{ij}\,Y_i Y_j, \\
S_{1R} &:= \sum_{i=1}^n \sum_{a=1}^{n_R} w^R_{ia}\,Y_i, \\
R_2 &:= \sum_{1\le a<b\le n_R} w^R_{ab}.
\end{align}
The \emph{normalized} histograms are
\begin{align}
DD(b) &:= \frac{\displaystyle \sum_{1\le i<j\le n} w_{ij}\,Y_i Y_j\,K_b(d_{ij})}{S_2},\\
DR(b) &:= \frac{\displaystyle \sum_{i=1}^n \sum_{a=1}^{n_R} w^R_{ia}\,Y_i\,K_b(d_{ia})}{S_{1R}},\\
RR(b) &:= \frac{\displaystyle \sum_{1\le a<b\le n_R} w^R_{ab}\,K_b(d_{ab})}{R_2}.
\end{align}
The Landy--Szalay estimator \citep{landy1993bias} is
\begin{equation}
\widehat \xi(b) = \frac{DD(b) - 2\,DR(b) + RR(b)}{RR(b)}.
\end{equation}

We assume that the random points are drawn independently from a distribution proportional to the survey window and selection function of the observed catalog, and that the same binning kernel $K_b$ and weighting scheme are applied consistently to $DD$, $DR$, and $RR$ pair counts. 

Under standard regularity conditions, namely that (i) $R$ correctly samples $\lambda_{\mathrm{obs}}(s)W(s)$, (ii) identical binning kernels $K_b$ and weighting schemes are applied to $DD$, $DR$, and $RR$ terms, and (iii) $n_R$ is sufficiently large that random-pair shot noise is negligible, the oracle LS estimator computed with true labels $Y$ satisfies
\begin{equation}
\mathbb{E}\big[\widehat{\xi}^\star(b)\big] \approx \xi(b),  
\end{equation}
where the expectation is taken over realizations of the underlying clustered point process (see Appendix~\ref{app:LS_derivation} for derivations and details). Our proposed estimator, described in Sec.~\ref{sec:pp_LS},  inherits this property after correcting for label noise.

\subsection{Prediction-powered Decomposition and 2PCF Estimator} \label{sec:pp_LS}

Next, we use the prediction-powered inference \cite{angelopoulos2023ppi} recipe to develop an estimator of the 2PCF from noisy labels. We recall observed residuals on the labeled subset $L$ is
\begin{equation}
\Delta_i := Y_i-\widetilde Y_i, \quad i\in L;
\end{equation}
and for a bookkeeping convention used in implementation  (see Sec.~\ref{sec:implimentation}), we set observed residuals to zero on the unlabeled subset
\begin{equation}
     \Delta_i := 0,\quad i\notin L.
\end{equation}

The identity
\begin{equation}
Y_i Y_j \equiv \widetilde Y_i\widetilde Y_j + \Delta_i \widetilde Y_j + \widetilde Y_i \Delta_j + \Delta_i \Delta_j
\end{equation}
holds for both hard and soft $\widetilde Y$. For algebraic convenience, we define \emph{ordered-pair} sums over $i\neq j$
\begin{align}
T_0(b)  &:= \sum_{i\neq j} w_{ij}\,\widetilde Y_i\widetilde Y_j\,K_b(d_{ij}),\\
T_1(b)  &:= \sum_{i\neq j} w_{ij}\,\Delta_i \widetilde Y_j\,K_b(d_{ij}),\\
T_2(b)  &:= \sum_{i\neq j} w_{ij}\,\Delta_i\Delta_j\,K_b(d_{ij}).
\end{align}
Converting to \emph{unordered} pairs introduces a factor $1/2$; thus the (target) galaxy--galaxy numerator is
\begin{equation}
{DD}(b)\;=\;\tfrac12\!\left[T_0(b)+2T_1(b)+T_2(b)\right].
\end{equation}

Since only a subset of data are labeled, we estimate the terms involving $\Delta$ by Horvitz--Thompson scaling for simple random labeling (sample size $m$ from $n$)
\begin{align}
\widehat T_0(b)   &:= T_0(b),\\
\widehat T_1(b)   &:= \frac{n}{m}\sum_{i\in L}\sum_{j\ne i} w_{ij}\,\Delta_i\widetilde Y_j\,K_b(d_{ij}),\\
\widehat T_2(b)   &:= \frac{n(n-1)}{m(m-1)}\sum_{\substack{i,j\in L\\ i\ne j}} w_{ij}\,\Delta_i\Delta_j\,K_b(d_{ij}).
\end{align}
The \emph{unordered} PP numerator in bin $b$ is
\begin{equation}
\widehat{DD_\mathrm{PP}}(b)
\;:=\; \tfrac12\!\left[\widehat T_0(b)+\widehat 2T_1(b)+\widehat T_2(b)\right].
\end{equation}

To match the subset-style normalization, we also estimate the denominators
\begin{align}
\widehat S_2 &:= \tfrac12\!\left[\widehat T_0^{(\mathrm{tot})}+\widehat 2T_1^{(\mathrm{tot})}+\widehat T_2^{(\mathrm{tot})}\right],
\end{align}
where the ``totals'' are obtained by replacing $K_b(d_{ij})$ with $1$ in the definitions of $\widehat T_k(b)$ for $k\in\{0,1,2\}$.
Hence the \emph{normalized} prediction-powered $DD$ term is
\begin{equation}
\widetilde{DD}_{\mathrm{PP}}(b) \;:=\; \frac{\widehat{DD_\mathrm{PP}}(b)}{\widehat S_2}.
\end{equation}

The $DR$ term is linear in $Y$ and needs only first-order correction:
\begin{widetext}
\begin{align}
\widehat{DR_\mathrm{PP}}(b)
&:= \sum_{i=1}^n \sum_{a=1}^{n_R} w^R_{ia}\,\widetilde Y_i\,K_b(d_{ia})
   \;+\; \frac{n}{m}\sum_{i\in L}\sum_{a=1}^{n_R} w^R_{ia}\,\Delta_i\,K_b(d_{ia}),\\
\widehat S_{1R} &:= \sum_{i=1}^n \sum_{a=1}^{n_R} w^R_{ia}\,\bigl(\widetilde Y_i + \tfrac{n}{m}\,\Delta_i\bigr),
\qquad
\widetilde{DR}_{\mathrm{PP}}(b) \;:=\; \frac{\widehat{DR_\mathrm{PP}}(b)}{\widehat S_{1R}}.
\end{align}
\end{widetext}
$RR$ is unchanged and normalized as in the original LS estimator
\begin{equation}
\widetilde{RR}(b) \;=\; \frac{\sum_{1\le a<b\le n_R} w^R_{ab}\,K_b(d_{ab})}{R_2}.
\end{equation}
Finally, the prediction-powered LS (PP-LS) estimator is
\begin{equation}
\boxed{
\widehat \xi_{\mathrm{PP}}(b)
= \frac{\widetilde{DD}_{\mathrm{PP}}(b) - 2\,\widetilde{DR}_{\mathrm{PP}}(b) + \widetilde{RR}(b)}{\widetilde{RR}(b)}
} 
\end{equation}
which has the same form as the original LS estimator. 

\subsection{Theoretical Properties}
\label{subsec:theory}

We formalize the relationship between the prediction-powered estimator and the oracle LS estimator that would be computed if the true labels $\{Y_i\}_{i=1}^n$ were observed for all objects.

Throughout this subsection, the source catalog $\{(s_i,\widetilde Y_i)\}_{i=1}^n$, the true labels $\{Y_i\}_{i=1}^n$, and the random catalog are treated as fixed. Expectations are taken only with respect to the random sampling of the labeled subset $L$, assumed to be a simple random sample of size $m$ drawn without replacement from $\{1,\dots,n\}$.

\paragraph*{Oracle estimator.} Define the oracle (true-label) pair-count numerators and denominators
\begin{align}
N_{DD}^\star(b) &:= \sum_{i<j} w_{ij}\,Y_iY_j\,K_b(d_{ij}), &
S_2^\star &:= \sum_{i<j} w_{ij}\,Y_iY_j, \nonumber\\
N_{DR}^\star(b) &:= \sum_{i,a} w^R_{ia}\,Y_i\,K_b(d_{ia}), &
S_{1R}^\star &:= \sum_{i,a} w^R_{ia}\,Y_i . \nonumber
\end{align}
The corresponding normalized histograms are
\begin{align}
DD^\star(b) &:= \frac{N_{DD}^\star(b)}{S_2^\star}, \\
DR^\star(b) &:= \frac{N_{DR}^\star(b)}{S_{1R}^\star},
\end{align}
and the oracle Landy--Szalay estimator is
\begin{equation}
\widehat\xi^\star(b)
=
\frac{DD^\star(b)-2\,DR^\star(b)+RR(b)}{RR(b)}.
\end{equation}

\paragraph*{Design-unbiased recovery of pair counts.}
The PP construction replaces $Y_i$ by $\widetilde Y_i + \Delta_i$, where $\Delta_i = Y_i-\widetilde Y_i$ is observed only on $L$ and set to zero otherwise. Writing
\[
Y_iY_j
=
\widetilde Y_i\widetilde Y_j
+
\Delta_i\widetilde Y_j
+
\widetilde Y_i\Delta_j
+
\Delta_i\Delta_j,
\]
and applying Horvitz--Thompson scaling to the terms involving $\Delta$, yields estimators
$\widehat{DD_\mathrm{PP}}(b)$,
$\widehat S_{2}$,
$\widehat{DR_\mathrm{PP}}(b)$,
and $\widehat S_{1R}$ as defined in Sec.~\ref{sec:setup}.

\begin{lemma}[Unbiasedness]
\label{lem:pp_unbiased_counts}
If $L$ is a simple random subset of size $m$, then
\begin{align}
\mathbb{E}_L\!\left[\widehat{DD_\mathrm{PP}}(b)\right]
&= DD^\star(b), &
\mathbb{E}_L\!\left[\widehat S_{2}\right]
&= S_2^\star, \\
\mathbb{E}_L\!\left[\widehat{DR_\mathrm{PP}}(b)\right]
&= DR^\star(b), &
\mathbb{E}_L\!\left[\widehat S_{1R}\right]
&= S_{1R}^\star .
\end{align}
\end{lemma}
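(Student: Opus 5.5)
The plan is to prove all four identities by the same design-based computation: write every sampling-dependent quantity as a fixed finite-population sum weighted by inclusion indicators, then take expectations term by term using only the first- and second-order inclusion probabilities of simple random sampling stated in Sec.~\ref{sec:setup}. Throughout, the catalog $\{(s_i,\widetilde Y_i,Y_i)\}$, the weights and the random catalog are held fixed, so the only randomness is in $L$.

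First I will fix what ``equals the oracle'' means. The quantities $\widehat{DD_\mathrm{PP}}(b)$ and $\widehat{DR_\mathrm{PP}}(b)$ are unnormalized pair sums. Their natural targets are therefore the oracle numerators $N_{DD}^\star(b)$ and $N_{DR}^\star(b)$, paired with the denominators $S_2^\star$ and $S_{1R}^\star$. I will read the lemma's $DD^\star(b)$ and $DR^\star(b)$ in that sense, and note this explicitly. Exact unbiasedness of the ratios $\widetilde{DD}_{\mathrm{PP}}=\widehat{DD_\mathrm{PP}}/\widehat S_2$ is not claimed and does not hold in general; it would only follow asymptotically by a delta-method or consistency argument.

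Next I introduce $I_i:=\mathds{1}\{i\in L\}$. The bookkeeping convention $\Delta_i=0$ for $i\notin L$ then reads $\Delta_i=I_i\delta_i$ with the fixed quantity $\delta_i:=Y_i-\widetilde Y_i$. Under simple random sampling, $\mathbb{E}_L[I_i]=m/n$ and $\mathbb{E}_L[I_iI_j]=m(m-1)/[n(n-1)]$ for $i\neq j$. The computation then proceeds term by term.

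\begin{itemize}
\item \emph{First-order term $\widehat T_1(b)$.} It equals $\frac{n}{m}\sum_{i}\sum_{j\neq i} I_i\, w_{ij}\delta_i\widetilde Y_j K_b(d_{ij})$. By linearity its expectation is $\sum_{i\neq j} w_{ij}\delta_i\widetilde Y_jK_b(d_{ij})$, the full-population $T_1(b)$ with true residuals.
\item \emph{Second-order term $\widehat T_2(b)$.} It equals $\frac{n(n-1)}{m(m-1)}\sum_{i\neq j}I_iI_j\,w_{ij}\delta_i\delta_jK_b(d_{ij})$, which has expectation $\sum_{i\neq j}w_{ij}\delta_i\delta_jK_b(d_{ij})$.
\item \emph{Recombining.} Using the algebraic identity $Y_iY_j=\widetilde Y_i\widetilde Y_j+\delta_i\widetilde Y_j+\widetilde Y_i\delta_j+\delta_i\delta_j$ and the symmetry $w_{ij}=w_{ji}$, the two cross terms coincide as ordered sums, which is what justifies the factor $2$ on $T_1$. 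This gives $\mathbb{E}_L[\tfrac12(\widehat T_0+2\widehat T_1+\widehat T_2)]=\tfrac12\sum_{i\neq j}w_{ij}Y_iY_jK_b(d_{ij})=N^\star_{DD}(b)$.
\item \emph{Denominator $\widehat S_2$.} It is the same construction with $K_b\equiv1$, so $\mathbb{E}_L[\widehat S_2]=S_2^\star$ follows immediately.
\end{itemize}

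The $DR$ statements are simpler because they are linear in $Y$. Since $\frac{n}{m}\Delta_i=\frac{n}{m}I_i\delta_i$ has expectation $\delta_i$, the correction term in $\widehat{DR_\mathrm{PP}}(b)$ has expectation $\sum_{i,a}w^R_{ia}\delta_iK_b(d_{ia})$. Adding the $\widetilde Y$ term gives $\sum_{i,a}w^R_{ia}Y_iK_b(d_{ia})=N^\star_{DR}(b)$. The same argument with $K_b\equiv1$ yields $\mathbb{E}_L[\widehat S_{1R}]=S_{1R}^\star$.

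I do not expect a genuinely hard step. The only places needing care are the second-order term, where one must use the joint inclusion probability for distinct indices, and the factor-of-two symmetrization. Note also that diagonal terms $i=j$ never appear, because all sums run over $i\neq j$, so no correction for $\mathbb{E}[I_i^2]=m/n$ is needed. The other point is the normalized-versus-unnormalized bookkeeping above, which I will state explicitly rather than leave implicit.
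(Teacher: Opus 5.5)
Your proposal is correct and follows the paper's argument: both reduce everything to the first- and second-order inclusion probabilities of simple random sampling, then apply the resulting Horvitz--Thompson identities term by term to the residual decomposition of $Y_iY_j$ and to the linear $DR$ term. You are also right to read $DD^\star(b)$ and $DR^\star(b)$ as the unnormalized numerators $N^\star_{DD}(b)$ and $N^\star_{DR}(b)$: the paper's own remark after the lemma says it recovers the oracle \emph{unnormalized} pair counts, and the literal normalized statement would not hold exactly.
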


\begin{proof}
Under fixed-size simple random sampling,
$\Pr(i\in L)=m/n$ and
$\Pr(i,j\in L)=m(m-1)/[n(n-1)]$ for $i\neq j$.
Hence, for arbitrary functions $g$ and $h$,
\begin{widetext}
\[
\mathbb{E}_L\!\left[\frac{n}{m}\sum_{i\in L} g(i)\right] = \sum_{i=1}^n g(i), \qquad \mathbb{E}_L\!\left[\frac{n(n-1)}{m(m-1)} \sum_{i\ne j\in L} h(i,j) \right] = \sum_{i\ne j} h(i,j).
\]
\end{widetext}
Applying these identities to the residual terms in the decomposition of $Y_iY_j$ and to the linear $Y_i$ term in $DR$ yields the result.
\end{proof}

Thus, in expectation over the labeled subset, the PP construction recovers the oracle \emph{unnormalized} pair counts exactly.

\paragraph*{Equivalence of PP-LS and oracle LS.}
Define the normalized PP histograms
\begin{equation}
\widetilde{DD}_{\mathrm{PP}}(b) = \frac{\widehat{DD_\mathrm{PP}}(b)}{\widehat S_{2}},
\qquad
\widetilde{DR}_{\mathrm{PP}}(b) = \frac{\widehat{DR_\mathrm{PP}}(b)}{\widehat S_{1R}},
\end{equation}
and
\begin{equation}
\widehat\xi_{\mathrm{PP}}(b)
=
\frac{\widetilde{DD}_{\mathrm{PP}}(b)
-2\,\widetilde{DR}_{\mathrm{PP}}(b)
+RR(b)}{RR(b)}.
\end{equation}

Because ratios of unbiased estimators are not exactly unbiased, we consider the large-sample regime in which denominators are bounded away from zero and pair counts concentrate. By Lemma~\ref{lem:pp_unbiased_counts} and a standard continuous mapping argument,
\begin{align}
\mathbb{E}_L\!\left[\widetilde{DD}_{\mathrm{PP}}(b)\right]
&\approx DD^\star(b), \\
\mathbb{E}_L\!\left[\widetilde{DR}_{\mathrm{PP}}(b)\right]
&\approx DR^\star(b).
\end{align}
Since $RR(b)$ is unchanged,
\begin{equation}
\mathbb{E}_L\!\left[\widehat\xi_{\mathrm{PP}}(b)\right]
\approx
\widehat\xi^\star(b).
\end{equation}

The PP-LS estimator is design-consistent for the oracle LS estimator: in expectation over the labeled subset, it reproduces the clustering statistic that would be obtained using the true labels. Combined with the standard LS result
$\mathbb{E}[\widehat\xi^\star(b)]\approx\xi(b)$
under adequate random catalogs and consistent weighting,
it follows that
\begin{equation}
\mathbb{E}_L\!\left[\widehat\xi_{\mathrm{PP}}(b)\right]
\approx
\xi(b).
\end{equation}

One key observation is that no assumptions are required on label classification calibration (in case of soft labels), error rates, or spatial structure of misclassification beyond the simple random sampling of the labeled subset.

\subsection{Implementation Remarks} \label{sec:implimentation}

Most modern pair-counting engines, such as \emph{Corrfunc} \citep{sinha2020corrfunc} and \emph{TreeCorr} \citep{jarvis2004skewness,2015ascl.soft08007J}, natively support per-object weights, so the proposed estimator can be implemented without modifying the underlying pair-search routines. The key is to encode the decomposition directly via object-level weights, then combine a small number of standard weighted pair counts with appropriate rescaling.

Assign to each object two weights. First, define $u_i := \widetilde{Y}_i$ for all objects. Second, define a label-restricted weight
\begin{equation}
v_i :=
\begin{cases}
\Delta_i, & i \in L,\\
0, & \text{otherwise},
\end{cases}
\end{equation}
so that $v_i$ is nonzero only for labeled objects. With these two weight fields in place, all required terms can be constructed from ordinary weighted data–data and data–random runs.

We implement the PP-LS estimator by encoding the decomposition in two per-object weight fields $u_i$ and $v_i$, then computing a small set of weighted pair-counts and applying finite-sample scaling factors; the full step-by-step procedure is given in Algorithm~\ref{alg:pp_estimator}. For each separation bin $b$, compute the following ordered weighted pair counts. The term $T_0$ is obtained from a standard data–data run over the full sample with weights $u \times u$. The cross terms are computed by restricting one leg of the ordered pair to labeled objects: $T_1$ uses weights $v \times u$ (labeled $\to$ all), and $T_1'$ uses weights $u \times v$ (all $\to$ labeled). Each of these cross terms is multiplied by the finite-sample correction factor $n/m$. The purely labeled contribution $T_2$ is formed from labeled–labeled ordered pairs with weights $v \times v$, scaled by $n(n-1)/[m(m-1)]$. Summing these ordered contributions and dividing by two yields the symmetrized estimator $\widehat{DD}_{\mathrm{PP}}$.

The DR term can be computed analogously with two runs: a full data–random count using weights $u$, and a labeled-only data–random count using weights $v$, the latter scaled by $n/m$. The random–random term remains unchanged. Throughout, the same binning kernel $K_b$ must be used consistently for $DD$, $DR$, and $RR$ to preserve unbiasedness and ensure that the estimator matches its theoretical definition.

\begin{table}[th!]
\captionsetup{name=Algorithm}
\caption{Compute pair-count building blocks for the PP estimator.}
\label{alg:pp_estimator}
\begin{ruledtabular}
\begin{minipage}{0.95\textwidth}
\begin{algorithmic}[1]
\Require $\{\widetilde Y_i,\Delta_i\}$; $L$; random catalog; kernel $K_b$
\Ensure $\widehat{DD}_{\mathrm{PP}}(b)$, $\widehat{DR}(b)$, $\widehat{RR}(b)$ for each bin $b$

\For{each object $i$}
    \State $u_i \gets \widetilde Y_i$
    \If{$i \in L$}
        \State $v_i \gets \Delta_i$
    \Else
        \State $v_i \gets 0$
    \EndIf
\EndFor

\State Compute ordered data--data weighted sums per bin $b$:
\State $T_0^{\mathrm{ord}}(b) \gets \sum u_i u_j$
\State $T_1^{\mathrm{ord}}(b) \gets \sum v_i u_j$
\State $T_1'^{\mathrm{ord}}(b) \gets \sum u_i v_j$
\State $T_2^{\mathrm{ord}}(b) \gets \sum v_i v_j$

\State $\widetilde T_1(b) \gets \frac{n}{m} T_1^{\mathrm{ord}}(b)$
\State $\widetilde T_1'(b) \gets \frac{n}{m} T_1'^{\mathrm{ord}}(b)$
\State $\widetilde T_2(b) \gets \frac{n(n-1)}{m(m-1)} T_2^{\mathrm{ord}}(b)$

\State $\widehat{DD}_{\mathrm{PP}}(b) \gets \frac{1}{2}\bigl[
T_0^{\mathrm{ord}}(b)
+ \widetilde T_1(b)
+ \widetilde T_1'(b)
+ \widetilde T_2(b)
\bigr]$

\State Compute $DR_u(b)$ with weights $u$
\State Compute $DR_v(b)$ with weights $v$
\State $\widehat{DR}(b) \gets DR_u(b) + \frac{n}{m} DR_v(b)$

\State Compute $\widehat{RR}(b)$ with standard random--random pairs
\end{algorithmic}
\end{minipage}
\end{ruledtabular}
\end{table}

Asymptotic normality of PP-LS estimator follows from classical U-statistic theory applied to the symmetric pair kernel, together with a first-order linearization of the Horvitz--Thompson correction term. Under mild regularity conditions on the sampling design and bounded second moments of the weights, the estimator admits a H\'ajek projection whose leading term is Gaussian. Consequently, the estimator is asymptotically normal within each separation bin and jointly asymptotically normal across bins.

For practical variance estimation, we suggest a spatial resampling scheme that respects the survey geometry and long-range clustering. A spatial jackknife or block bootstrap provides a robust, design-consistent estimate of the full covariance matrix. The survey footprint is partitioned into $B$ contiguous regions of approximately equal area (or equal effective object count). For each leave-one-out region, the estimator must be recomputed in its entirety, including all components
$(\widehat T_0,\quad
\widehat T_1,\quad
\widehat T_1',\quad
\widehat T_2,\quad
\widehat{DR}_{\mathrm{PP}},\quad
\widehat{RR})$.
The jackknife covariance is then formed from the resulting pseudo-values across bins in the standard manner. Because each subsample recomputes every term, this procedure captures cosmic variance, survey mask effects, and variability arising from the residual-based correction.

The magnitude of the covariance depends on the calibration quality of the soft labels $\widetilde Y$. When these are well calibrated \citep{vashistha2025mathcali}, the Horvitz--Thompson correction terms are typically small, leading to reduced variance. In contrast, when hard labels are used, the correction terms would be larger, increasing sensitivity to labeled-sample fluctuations and inflating the covariance.

Finally, although our estimator permits per-object weighting, we do not recommend using ad hoc weights to remove systematics or to correct biases arising from misclassification and contamination \citep{ross2012clustering,elvin2018dark}. The proposed construction explicitly accounts for these effects through its correction terms, rendering additional debiasing weights unnecessary. Introducing external weighting schemes to address contamination or incompleteness risks double-correcting the signal or obscuring the interpretation of the estimator. That said, weights may still be employed for variance reduction, for example, to optimize signal-to-noise under heteroskedastic measurement uncertainty, provided they are not used as substitutes for systematic-bias correction.

\section{Benchmarks} \label{sec:benchamrk}

We compare the proposed estimator with the \emph{cross-correlation decontamination} (CCD) estimator. CCD decontaminates clustering statistics by leveraging (i) a nearly pure contaminant catalog and (ii) sample purity estimates obtained from a labeled subset; see Appendix~\ref{app:ccd-derivation} for a derivation. We follow the common LS within CCD, where pairs between the science sample and a pure contaminant catalog are subtracted using purity weights. Besides performance, the practical advantage of our estimator is that it does \emph{not} require access to a pure contaminant catalog, which is often infeasible, and it relaxes several constraining assumptions about purity estimation.

We evaluate all methods on controlled synthetic skies generated via a Thomas process and a modular \emph{Synthetic Field Construction} pipeline (Appendix~\ref{app:Thomas_Process}). The pipeline exposes knobs for (a) intrinsic clustering of the target (galaxy) population, (b) spatial structure of contaminants, and (c) supervision conditions (labeled subset size, label noise, and calibration).

\paragraph*{Synthetic sky generation.}

We simulate square survey windows $W = [0,L]^2$ with $L=1$ (dimensionless units), using periodic boundary conditions for pair counting in order to eliminate edge effects and isolate estimator behavior. The \emph{true} galaxy field is generated from a Thomas process (see Appendix~\ref{app:Thomas_Process}), which produces a positively correlated clustered point pattern with analytically controlled small-scale structure (see Figure~\ref{fig:toy_example} as an example).

The choice of a Thomas process \citep{thomas1949generalization}, which is similar to Neyman--Scott's model \citep{neyman1958statistical,daley2003introduction}, rather than an $N$-body simulation with a fully realistic survey pipeline, is deliberate. Our objective is not to reproduce the detailed phenomenology of large-scale structure, but to evaluate the statistical properties (bias and variance) of competing estimators under controlled and repeatable conditions. The Thomas process provides (i) explicit control over clustering strength through the parent intensity, offspring distribution, and dispersion scale; (ii) stationarity and isotropy, which simplify interpretation; and (iii) computational efficiency that permits thousands of independent realizations for precise Monte Carlo assessment and variance estimation. In contrast, $N$-body simulations introduce additional layers of modeling complexity (halo finding, galaxy-halo connection, survey masks, selection functions) that are unnecessary for isolating label-misclassification effects and would substantially reduce the number of realizations achievable for variance estimation.

\paragraph*{Noisy labels and scores.}

Each detected object $i$ receives a classifier score $s_i\in[0,1]$. In practice, this score can be estimated from simple features (photo-$z$, magnitude, color, and a context feature approximating local number density). Conditional on class, $s_i$ follows a parametric location-scale family with partial overlap:
\begin{equation*}
s_i\mid Y_i=1 \sim f_1(\cdot;\theta_1),\qquad
s_i\mid Y_i=0 \sim f_0(\cdot;\theta_0),    
\end{equation*}
with parameters selected to achieve the desired operating points. The observed \emph{noisy hard label} $\widetilde Y_i\in \{0,1\}$ is obtained by thresholding at $\tau$ and flipping with false-positive/negative rates $(\alpha,\beta)$ to emulate annotation noise in a labeled subset. In practice, a small labeled set $C$, where $L \cap C=\varnothing$, may be reserved for calibrating $s_i\mapsto \widehat p_i\approx \Pr(Y_i = 1 \mid s_i)$ using isotonic or Platt calibration \citep{platt1999probabilistic,zadrozny2002transforming,guo2017calibration}; the large unlabeled set $U = S\backslash L$ is used for clustering estimation. 
See Figure~\ref{fig:contamination_error_fields} for the expected false positive and false negative rates. This setup deliberately violates the assumption of spatially independent label noise, providing a stringent test of the robustness of the PP-LS estimator to realistic, spatially correlated classification systematics.

\paragraph*{Benchmark estimators.}
We compare the proposed PP-LS against several representative alternatives:

\begin{itemize}
\item \textbf{CCD (Cross-Correlation Decontamination).} 
A contamination-correction approach based on cross-correlating the observed sample with a (nearly) pure contaminant catalog. This method requires (i) access to a high-purity contaminant sample and (ii) an external estimate of the target-sample purity $\widehat{\pi}$, typically obtained from the labeled subset $L$ (see Appendix~\ref{app:ccd-derivation}). Its performance is sensitive to the quality of the contaminant template and the accuracy of the purity estimate.

\item \textbf{Naïve thresholding.} 
The standard LS estimator applied directly to the observed inclusion set $\widetilde G = \{s_i : \widetilde Y_i = 1\}$, ignoring label noise. This estimator targets the clustering of the noisy catalog rather than the truth-level population $G$, and is generally biased when $\widetilde Y_i \neq Y_i$.

\item \textbf{Spectroscopic-only sample.} 
The LS estimator applied exclusively to the labeled subset $L$. This estimator is unbiased for the clustering of $G$ under the simple-random-sampling assumption, but discards the vast majority of objects and therefore represents a worst-case scenario in terms of statistical efficiency.

\item \textbf{Oracle.} 
The LS estimator computed on the unobserved true catalog $G = \{s_i : Y_i = 1\}$. This estimator is infeasible in practice and serves as a lower bound on achievable variance.
\end{itemize}

In contrast to these baselines, PP-LS is applied to the full source catalog $S$ while incorporating the high-fidelity information contained in $L$. It is unbiased for the truth-level clustering under the stated design assumptions and retains the statistical efficiency of the large photometric sample. We also note that several decontamination methods based on cross-correlation with a contaminated sample have been proposed \citep{pullen2016interloper, awan2020angular, foroozan2022correcting}. In this work, we adopt the approach of \citep{foroozan2022correcting}, which performs well for our specific setup.

\subsection{Evaluation Protocol}\label{subsec:protocol}

For each experiment, we generate $N_{R}$ random realizations.  Distances are binned into 15 logarithmic bins from $r_{\min}=0.01L$ to $r_{\max}=0.2L$ with a \emph{top-hat} kernel $K_b$ (see Sec.~\ref{sec:kernel_def}). Random catalogs use $n_R=8000$ points. 

\begin{figure*}
    \centering
    \includegraphics[width=1\linewidth]{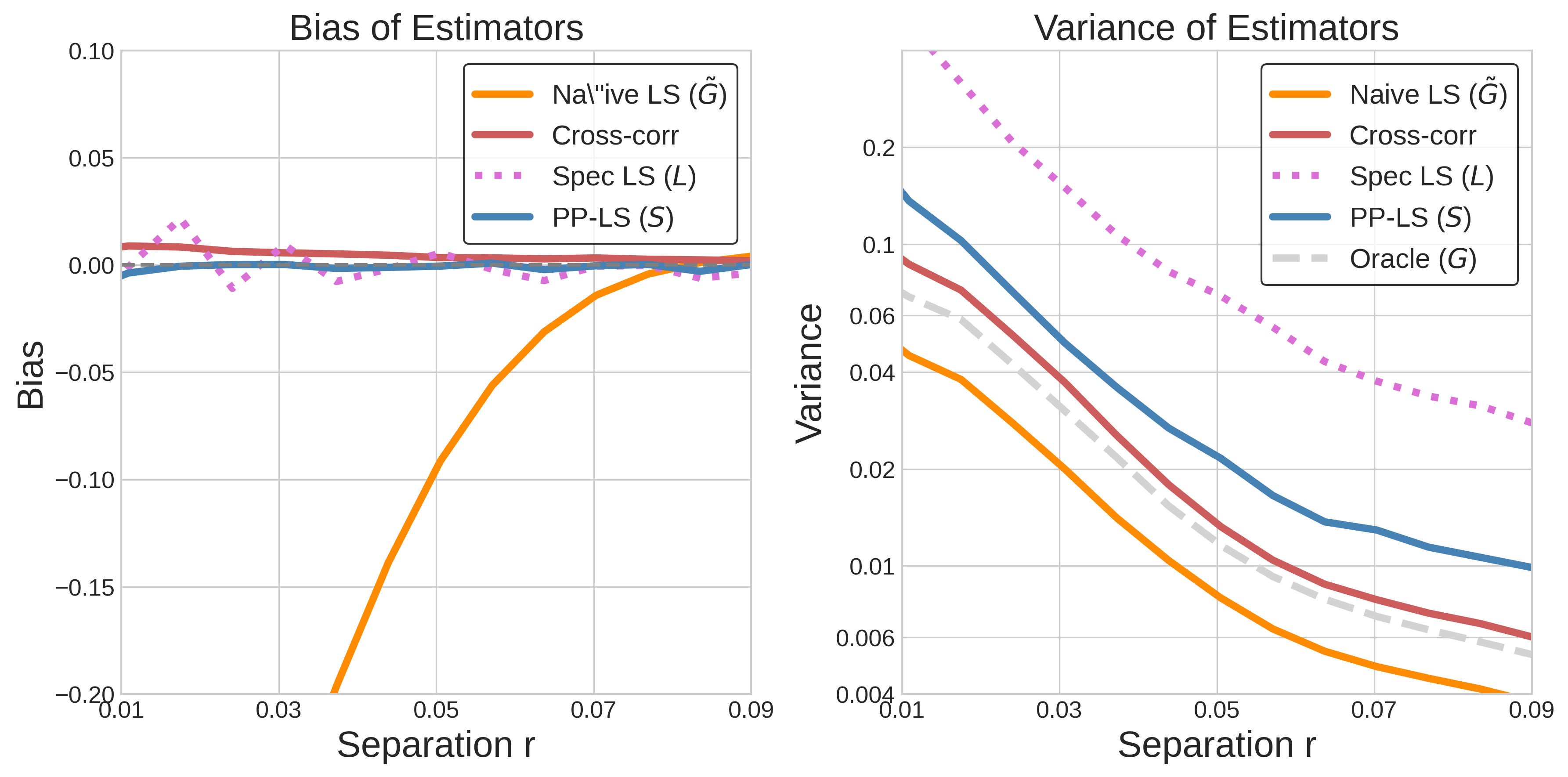}
    \caption{Bias (left) and variance (right) of 2PCF estimators as a function of separation $r$, averaged over $N_R=2000$ simulated realizations. Bias is computed relative to the oracle estimator on sample $G$, which has access to the true labels. The Na\"ive LS estimator exhibits strong small-scale bias but low variance, while the cross-correlation and PP-LS estimators remain approximately unbiased with increased variance. The LS estimator on the spectroscopic sample $L$ shows the largest variance due to reduced sample size ($m/n \ll 1$). The oracle estimator provides the minimum-variance reference for unbiased estimators.}
    \label{fig:bias_variance}
\end{figure*}

{\bf Metrics.} Let $\xi^\star(r)$ denote the known oracle two-point function estimated from the true labels, assuming all objects of interest are observed and correctly classified. We report
\begin{align*}
\text{Bias}(r)&=\mathbb{E}[\widehat{\xi}(r)]-\xi^\star(r),\quad {\rm and} \\
\text{Var}(r) &= \mathbb{E}\left[\big(\widehat{\xi}(r) - \mathbb E[\widehat{\xi}(r)]\big)^2\right].
\end{align*}
However, our focus is on variance, since all estimators, except Na\"ive photo-$z$ LS estimator, considered here are unbiased. Throughout, variances are reported relative to the oracle variance, where the oracle estimator assumes that the true labels of all sources are perfectly known -- corresponding to an idealized, fully spectroscopic survey with no label uncertainty.

\section{Results}\label{subsec:results}

\subsection{Illustrative Example}

Figure~\ref{fig:toy_example} presents a controlled two-dimensional experiment designed to visualize the distinction between the truth-level population $G=\{s_i:Y_i=1\}$ (left panel) and the observed inclusion set  $\widetilde G=\{s_i:\widetilde Y_i=1\}$ (middle panel), and to compare the resulting correlation estimates.

We simulate a field on the unit square $\mathcal{S}=[0,1]^2$ with separations binned over $r\in[0.005,0.25]$ using 15 linear bins. A uniform random catalog of 8,000 points models the survey window. The truth-level target population $G$ is generated from a clustered Thomas process (parent intensity $\lambda_{\rm parent} = 200$, mean offspring number $\langle N_{\rm off} \rangle=25$, dispersion $\sigma=0.03$), producing realistic small-scale clustering. An additional contaminant population is superimposed, drawn from an inhomogeneous process combining a large-scale gradient with a localized hotspot (Figure~\ref{fig:contamination_error_fields}). The detected source catalog $S$ is the union of target objects and contaminants.

Noisy catalog membership is constructed by assigning a binary inclusion label $\widetilde Y_i\in\{0,1\}$ with spatially varying false-positive and false-negative rates. Misclassification rates are elevated near the contaminant hotspot and within a shallow-depth stripe, generating both contamination ($Y_i=0$, $\widetilde Y_i=1$) and incompleteness ($Y_i=1$, $\widetilde Y_i=0$). The resulting observed set $\widetilde G$ has an overall contamination fraction of approximately $30\%$. A labeled subset $L$, comprising $10\%$ of the detected objects and drawn uniformly at random from $S$, provides gold-standard labels on a small fraction of the sample.

The standard LS estimator applied to $\widetilde G$ measures the clustering of the noisy catalog rather than that of $G$ (red dotted curve). In the figure, this estimator is biased high at small separations due to the clustered hotspot contaminants, and biased on large scales due to the imposed gradient. The LS estimator applied to $G$ (oracle) recovers the true clustering by construction and serves as a reference (gray curve). The PP-LS estimator, computed on the full source catalog $S$ with residual corrections estimated only from the labeled subset $L$, closely tracks the oracle curve across all scales (blue dashed curve).

The comparison shows that spatially structured discrepancies between $G$ and $\widetilde G$ can induce scale-dependent bias in standard analyses, and that residual-based correction using a small random labeled subset is sufficient to recover the truth-level two-point correlation function in this setting. Next, we study bias and variance of different estimator choices and configurations.

\subsection{Benchmarking Bias and Variance of 2PCF Estimators}

Figure~\ref{fig:bias_variance} shows the bias (left panel) and variance (right panel) of several estimators of the 2PCF as a function of separation $r$, averaged over $N_R = 2000$ independent realizations. The radial range $r \in [0.001, 0.1]$ is divided into 15 top-hat bins. Bias is defined relative to the oracle estimator, which has access to the true class labels and therefore provides a reference baseline for the underlying clustered signal. Each realization consists of a clustered signal component generated from a Thomas process with parent intensity $\lambda_{\rm parent}=100$, mean offspring number $\langle N_{\rm off} \rangle = 20$, and Gaussian spatial dispersion $\sigma = 0.02$. The signal is embedded in a spatially inhomogeneous contaminant field with expected contamination fraction $30\%$. True labels $Y \in \{0,1\}$ distinguish signal and contaminants. Observed labels $\widetilde Y$ are obtained by introducing symmetric misclassification at an expected rate of $10\%$, uniformly across the field. 

A spectroscopic subsample is drawn uniformly at random from the full catalog, with labeled fraction $|L|/n \approx 0.1$, defining the gold-standard subset used by the cross-correlation and PP-LS estimators. For each realization, we compute five estimators: (i, Oracle) the oracle estimator using true labels for all objects, (ii, Na\"ive LS) the na\"ive estimator using noisy labels in $\widetilde G$, (iii, Cross-corr) a cross-correlation-based decontamination estimator, (iv, PP-LS) the proposed PP-LS estimator on the entire source catalog $S$, and (v, Spec LS) a spectroscopic-only estimator using only objects in the labeled subset $L$. All pair counts are evaluated using a common random catalog of $8000$ uniformly distributed points and the LS normalization. For each separation bin, the bias is computed as the difference between the mean estimator value and the mean oracle estimate across realizations, and the variance is estimated empirically from the same ensemble. 

The left panel displays the estimator bias. The LS estimator on noisy labels exhibits a pronounced negative bias on small scales, reaching $\mathcal{O}(10^{-1})$ at the smallest separations considered. This bias decreases monotonically with increasing $r$ and becomes negligible only at large separations. In contrast, the cross-correlation, PP-LS, and spectroscopic LS estimators are consistent with zero bias across all separations, with residual fluctuations attributable to Monte Carlo noise. The spectroscopic estimator shows slightly larger small-scale scatter, reflecting the reduced effective sample size after label masking, but as anticipated, no systematic bias is observed.

The right panel shows the variance of each estimator. The oracle estimator, while unattainable in practice, serves as a useful reference and shows the minimum achievable variance for unbiased estimators. The LS estimator on noisy labels in $\widetilde G$ exhibits even smaller variance, reflecting its substantially larger effective sample size, though this apparent advantage is offset by its strong bias. The CCD incurs a modest variance penalty relative to the oracle estimator, while the PP-LS estimator shows slightly higher variance, particularly at larger separations where the true correlation function approaches zero. The spectroscopic estimator has the largest variance across all scales due to incomplete labeling and the resulting reduction in sample size. For all estimators, the variance decreases with increasing separation, consistent with the growth in the number of contributing pairs.

When CCD's strong assumptions truly hold, depending on the size of the spectroscopic sample, the CCD method can be more competitive. In the more realistic regimes where the contaminant catalog is not pure, or purity is not known precisely, our estimator is consistently more robust, offering lower small-scale bias and better uncertainty calibration, without requiring an external contaminant catalog.

\begin{figure}
    \centering
    \includegraphics[width=0.98\linewidth]{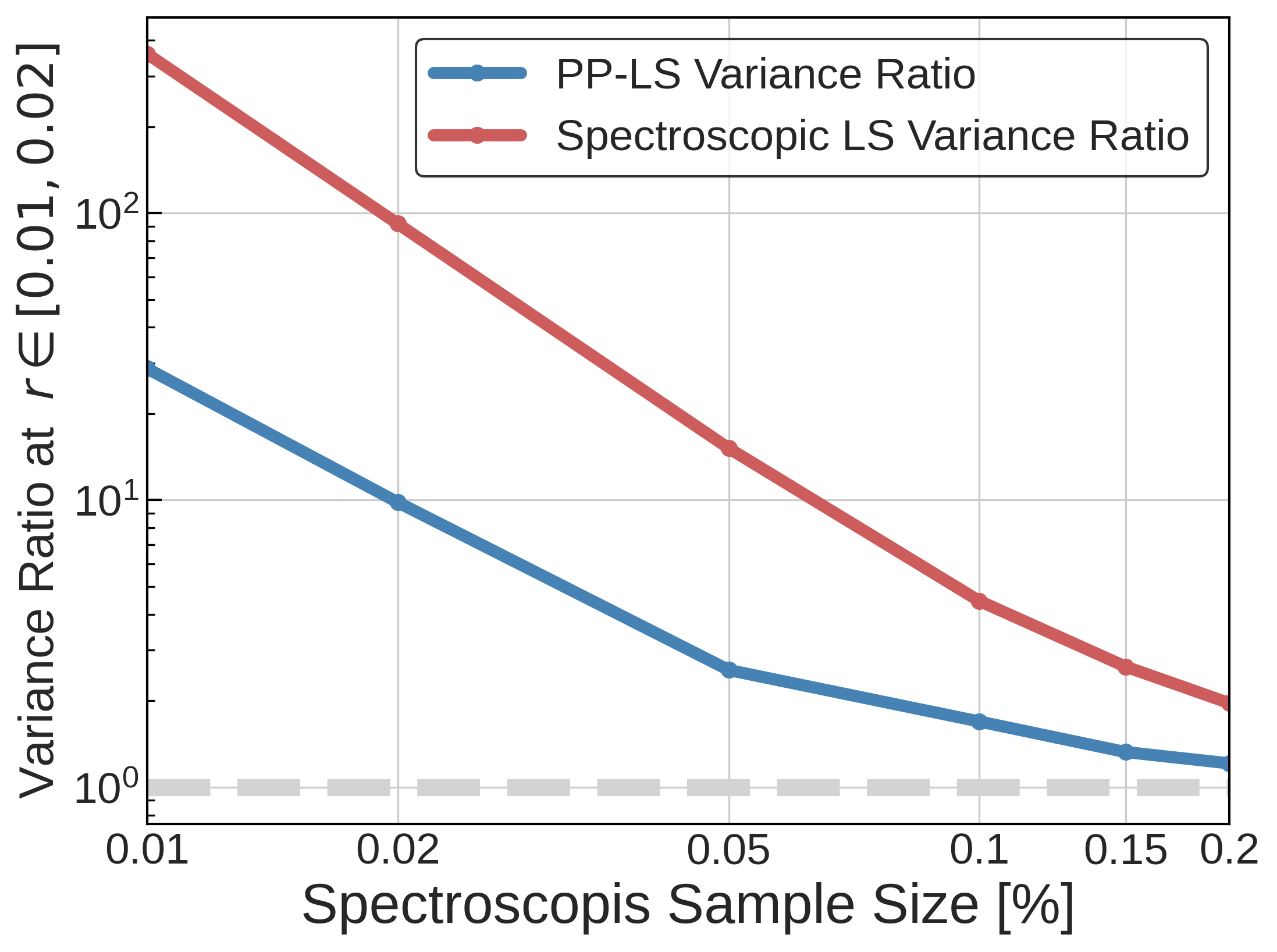}
    \caption{Variance ratio of the PP-LS and LS estimators on $S$ and $L$, relatively, relative to the oracle estimator on $G$ as a function of the labeled sample size, shown for a fixed separation bin $r \in [0.01, 0.02]$. Each point is estimated from 800 independent realizations. The spectroscopic LS estimator exhibits extremely large variance at small labeled fractions, while the PP-LS estimator achieves substantially lower variance by leveraging the unlabeled sample. The dashed horizontal line indicates the oracle baseline variance (normalized to 1).}
\label{fig:spec_sample_variance}
\end{figure}

\subsection{Variance Scaling with Labeled Sample Size}

The PP-LS method is intended for settings in which a small spectroscopic sample is combined with a much larger photometric sample. It is therefore important to quantify the variance reduction achieved relative to using the spectroscopic sample alone, and to characterize how this improvement depends on the relative sizes of the labeled ($L$) and unlabeled ($S \backslash L$) samples. Figure~\ref{fig:spec_sample_variance} shows the variance of the PP-LS and spectroscopic LS estimators as a function of the labeled (spectroscopic) sample size, evaluated at a fixed small-scale separation bin $r \in [0.01, 0.02]$. For each labeled fraction, the variance is estimated from $N_R=800$ independent realizations and normalized by the variance of the oracle estimator for the same bin.

Both estimators exhibit a strong decrease in variance as the labeled fraction increases, reflecting the increasing amount of supervised information contained in the spectroscopic sample. However, their dependence on the labeled sample size $m$ differs substantially. The spectroscopic LS estimator shows extremely large variance when the labeled fraction is small ($\le 5\%$), exceeding the oracle variance by more than one order of magnitude at the lowest labeled fractions considered. Its variance decreases steeply with increasing labeled fraction, approaching the oracle level only when a substantial portion of the sample is labeled ($20\%$).

In contrast, the PP-LS estimator is markedly more efficient in the low-label regime. Even when only $\mathcal{O}(1\%)$ of objects are labeled, the PP-LS variance exceeds the oracle variance by only one to two orders of magnitude, representing an order-of-magnitude improvement over the spectroscopic LS estimator. As the labeled fraction increases, the PP-LS variance rapidly approaches the oracle limit and remains within a factor of a few across the full range shown.

These results demonstrate that PP-LS effectively leverages unlabeled noisy samples to reduce variance at small scales, yielding substantial gains over purely spectroscopic estimation when labels are sparse. This behavior is consistent with interpreting PP-LS as a semi-supervised estimator that smoothly interpolates between the fully supervised ($G$) and label-limited regimes ($L$).

 \subsection{Effect of Source Inclusion Label Error}

Figure~\ref{fig:classification_error} shows the impact of source inclusion label error in the labeled (spectroscopic) sample on the variance of the PP-LS estimator, evaluated at a fixed small-scale separation bin $r \in [0.01, 0.02]$. The inclusion label error rate parameterizes the fraction of mislabeled objects in sample $\widetilde G$ (both false positives and false negatives), while the spectroscopic labeled fraction is held fixed at $2\%$. For each error rate, the variance is estimated from $N_R = 800$ independent realizations and normalized by the variance of the oracle estimator for the same bin.

The simulations are based on clustered signal points generated from a Thomas process with parent intensity $\lambda_{\rm parent}=60$, mean offspring number $\langle N_{\rm off} \rangle = 20$, and spatial dispersion $\sigma=0.05$, embedded in an inhomogeneous contaminant field with an expected contamination fraction of $30\%$. True inclusion labels $Y$ distinguish signal and contaminants perfectly, while observed inclusion labels $\widetilde Y$ are generated by introducing controlled misclassification error: inclusion labels are flipped to contaminants for target sources, and vice versa, with probability given by the classification error rate uniformly across the survey area. A spectroscopic subsample is then drawn uniformly at random from the full catalog, defining the labeled set used by PP-LS.

As the classification error rate increases, the variance of the PP-LS estimator grows monotonically. In the absence of classification error, PP-LS attains near-oracle performance, with variance only slightly above the oracle limit. However, even modest levels of misclassification lead to slow degradation, with the variance increasing by more than an order of magnitude once the error rate exceeds $\sim 30\%$. At high error rates, the PP-LS variance approaches the variance of a purely spectroscopic LS estimator with the same labeled fraction, indicated by the upper dashed line.

This behavior highlights the trade-off between the number of labels (Figure~\ref{fig:spec_sample_variance}) and classification quality (Figure~\ref{fig:classification_error}). While PP-LS can efficiently leverage a small labeled sample to suppress variance by incorporating the full photometric dataset, misclassified labels introduce additional noise that propagates through the estimator. Even so, across the full range of error rates considered, PP-LS consistently outperforms the spectroscopic-only baseline, achieving more than an order-of-magnitude reduction in variance for moderate levels of classification error $\lesssim 20\%$. This behavior demonstrates that PP-LS's robustness to relatively large classification error, as expected in realistic large-scale surveys.

\begin{figure}
    \centering
    \includegraphics[width=0.98\linewidth]{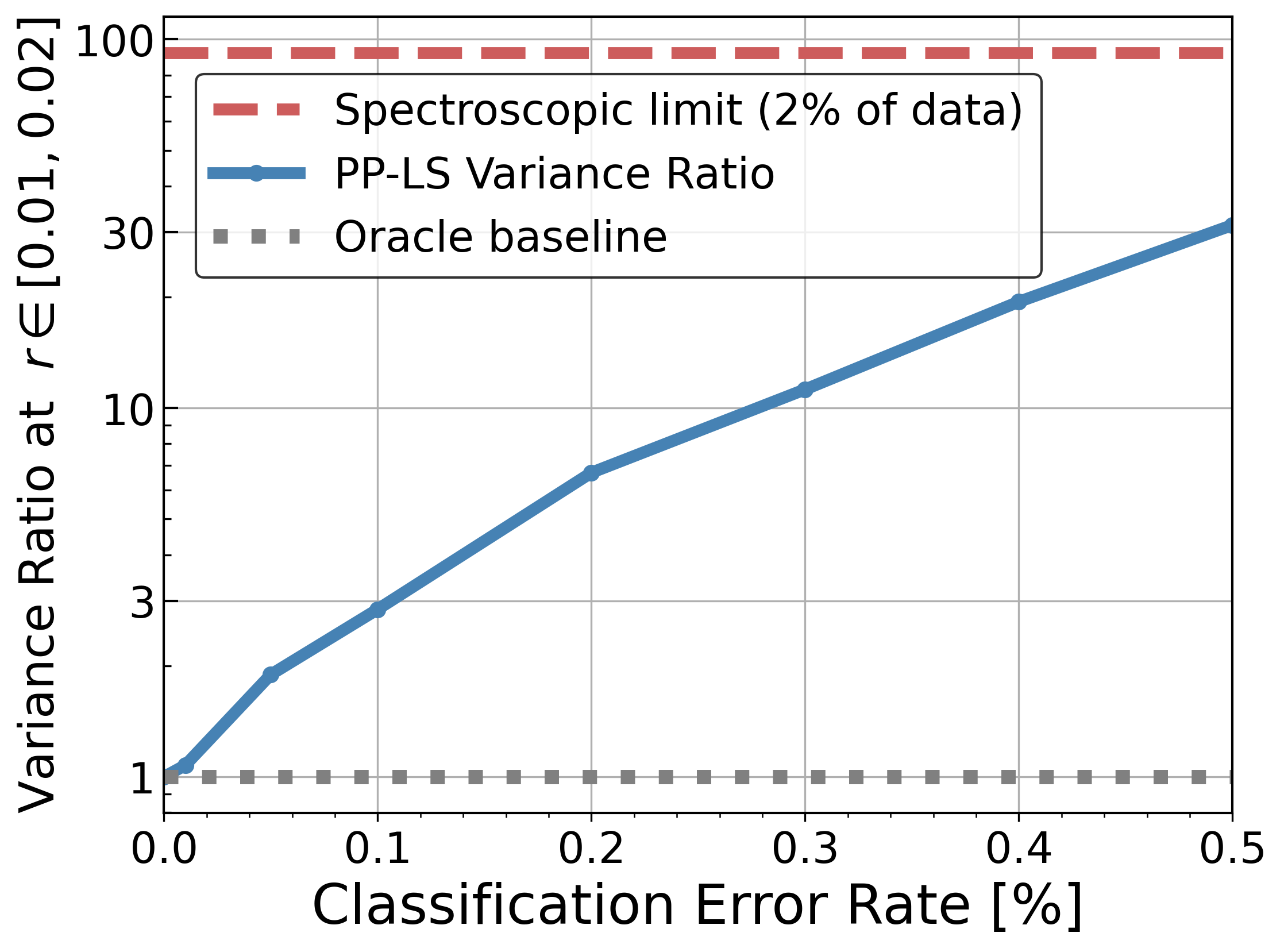}
    \caption{
    Variance ratio of the PP-LS estimator relative to the oracle estimator as a function of the classification error rate in the labeled (spectroscopic) sample, shown for a fixed separation bin $ r \in [0.01, 0.02] $. The labeled fraction is fixed at $2\%$. Target sources are drawn from a Thomas process with a clustered structure and embedded in an inhomogeneous contaminant field. Each point is estimated from $800$ independent realizations. The lower dashed horizontal line indicates the oracle variance, while the upper dashed line shows the variance of the LS estimator on a purely spectroscopic sample $L$ with the $2\%$ labeled fraction.}
    \label{fig:classification_error}
\end{figure}

\subsection{Limitations and Survey Dependence}
\label{sec:limitations}

The simulations presented in this work are designed to isolate estimator behavior under controlled conditions rather than to reproduce the full complexity of a specific cosmological survey. The underlying galaxy field is modeled as a stationary, clustered point process with periodic boundary conditions, and classification errors are introduced in a spatially inhomogeneous, statistically independent manner. Real surveys, however, exhibit additional structure: survey masks, depth variations, spatially varying selection functions, redshift evolution, scale-dependent bias, and nontrivial correlations between photometric quality and environment \citep{abbott2022dark,to2025dark}. Each of these effects can modify both the clustering signal and the effective misclassification mechanism.

As a consequence, the quantitative gain obtained by combining a large, low-fidelity (contaminated) photometric sample with a smaller, high-fidelity spectroscopic subsample is survey dependent. While PP-LS is unbiased (see Sec.~\ref{subsec:theory} for requirements), the relative improvement in variance depends on factors such as the intrinsic clustering amplitude, number density, contamination fraction, the structure of the confusion matrix, spatial variation in classification performance, and the labeled fraction $|L|/n$. In particular, the optimal allocation of photometric and spectroscopic effort is coupled to the survey's depth, footprint, and target-selection strategy. The results reported here, therefore, establish generic trends and scaling behavior, demonstrating that hybrid estimators can reduce bias while retaining a substantial fraction of the statistical power of the full sample. However, precise numerical gains should not be interpreted as universal. A fully realistic assessment requires embedding the estimator within an end-to-end survey model, and the joint optimization of low- and high-fidelity samples is most naturally viewed as a component of survey design.

\section{conclusion} \label{sec:conclusion}

We have introduced a prediction-powered Landy--Szalay (PP-LS) estimator for two-point correlation analysis under noisy source catalog. The method explicitly distinguishes between the truth-level target population 
$G = \{s_i : Y_i=1\}$ and the observed inclusion set  $\widetilde G = \{s_i : \widetilde Y_i=1\}$,
and corrects the resulting discrepancy using a small, high-fidelity labeled subset drawn as a simple random sample from the source catalog. By embedding Horvitz--Thompson–scaled residual corrections directly into the pair counts, PP-LS reconstructs, in expectation over the labeling design, the same clustering statistic that would be obtained if the true labels were observed for all objects.

The construction operates within the standard Landy--Szalay framework: a realistic random catalog is still required to model survey geometry and baseline selection, while the prediction-powered component corrects for misidentification of inclusion (contamination and incompleteness) without requiring calibration of uncertainty in inclusion labels, knowledge of false-positive or false-negative rates, or a parametric model of the contaminant population. Under simple random sampling of the labeled subset, the estimator is design-unbiased and consistent for the truth-level 2PCF, and it preserves the geometric and edge-correction advantages of the classical LS estimator.

Our simulations show that PP-LS remains reliable and unbiased in regimes with spatially structured label errors, including clustered contaminants and redshift misassignment, where na\"ive LS estimators on contaminated samples are biased and spectroscopic-only analyses suffer severe variance inflation due to limited sample size. By leveraging the full source catalog $S$ for pair counting and using the labeled subset $L$ only for bias correction, PP-LS achieves substantial efficiency gains in the low-label regime. It effectively interpolates between the spectroscopic limit (small but unbiased) and the oracle limit (full truth-level information), while requiring only a small number of additional weighted pair counts that can be computed using existing correlation-function pipelines.

This framework opens new avenues for working with contaminated and noisy data. Although photometric redshift contamination is the dominant source of contamination in imaging surveys, the framework is fully general: any mechanism that induces a mismatch between $Y_i$ and $\widetilde Y_i$ (including luminosity scatter, classification error, or spatially correlated systematics) fits naturally into the same formalism. Extensions to unequal-probability or stratified labeling designs, incorporation into tomographic and redshift-space analyses, and generalization to higher-order statistics represent natural directions for future work. As forthcoming surveys such as LSST, \textit{Euclid}, and Roman deliver massive photometric samples with comparatively limited spectroscopy, prediction-powered correlation estimation provides a scalable, design-based, and statistically principled foundation for unbiased clustering inference under realistic observational conditions.

\begin{acknowledgments}
We thank Karl Gebhardt for the fruitful discussions during the early phases of this work and Dragan Huterer for helpful comments. We acknowledge support from the National Science Foundation under Cooperative Agreement 2421782 and the Simons Foundation award MPS-AI-00010515.
\end{acknowledgments}

\newpage
\appendix

\section{Thomas Process and Synthetic Field Construction}
\label{app:Thomas_Process}

The Thomas process is a widely used example of a clustered spatial point process, belonging to the broader class of Neyman--Scott cluster processes. It is defined on a bounded observation window $D \subset \mathbb{R}^2$ (here we take $D=[0,1]^2$) and is parameterized by three quantities: the parent intensity $\lambda_{\rm parent} > 0$, the mean number of offspring per parent $\langle N_{\rm off} \rangle > 0$, and the offspring dispersion parameter $\sigma > 0$.

The process is constructed hierarchically:
\begin{enumerate}
    \item \textbf{Parent generation.} First, a set of ``parent'' points $\{P_i\}_{i=1}^{N_p}$ is generated from a homogeneous Poisson process on $D$ with intensity $\lambda_{\rm parent}$. Thus, the number of parents satisfies
    $
        N_p \sim \text{Poisson}(\lambda_{\rm parent} |D|),
    $
    and conditional on $N_p$, the parent locations $P_i \in D$ are i.i.d.\ uniform over $D$.
    \item \textbf{Offspring generation.} Each parent $P_i$ generates a Poisson-distributed number of offspring,
    $
        N_i \sim \text{Poisson}(\langle N_{\rm off} \rangle),
    $
    independently across parents.
    \item \textbf{Offspring displacement.} Conditional on a parent $P_i = (p_x, p_y)$, the $N_i$ offspring are distributed according to an isotropic bivariate Gaussian density centered at $P_i$ with variance $\sigma^2$ per coordinate:
    $
        (X_{ij}, Y_{ij}) \;\sim\; \mathcal{N}(p_x, \sigma^2) \times \mathcal{N}(p_y, \sigma^2), \quad j=1,\dots,N_i.
    $
    Offspring lying outside $D$ are discarded.
\end{enumerate}

The resulting process $\mathcal{T} = \{(X_{ij},Y_{ij})\}$ is a clustered point pattern: parent points form a homogeneous Poisson process, and offspring are Gaussian-distributed around them. The degree of clustering is controlled by $\langle N_{\rm off} \rangle$ (average cluster size) and $\sigma$ (spatial spread). When $\langle N_{\rm off} \rangle=1$ and $\sigma \to 0$, the process converges to a homogeneous Poisson process.

\subsection*{Synthetic Field with Contaminants and Label Noise}

To construct the synthetic field used in our experiments, we superimpose two spatial components:
\begin{enumerate}
    \item \textbf{Galaxy population.} True galaxies are generated via the Thomas process with parameters $(\lambda_{\rm parent}, \langle N_{\rm off} \rangle, \sigma)$, producing a clustered distribution of points.
    \item \textbf{Contaminant population.} Contaminants are drawn from an inhomogeneous Poisson process with spatially varying intensity function
    $
        \lambda_c(x,y) = 0.5 + 1.5x + 2 \exp \left[-\frac{(x-0.75)^2 + (y-0.25)^2}{2 \times 0.03^2}\right].
    $
    This intensity consists of a smooth linear gradient in $x$ together with a localized ``hotspot'' centered at $(0.75,0.25)$.
\end{enumerate}

The two populations are combined into a single catalog of positions $\{(x_k,y_k)\}_{k=1}^N$. Each point is assigned a \emph{true class label} $Y_k \in \{0,1\}$, with $Y_k=1$ for galaxies and $Y_k=0$ for contaminants.

\paragraph*{Label noise.} The observed labels $\widetilde{Y}_k$ are subject to spatially varying misclassification. Specifically,
\begin{align}
    \Pr(\widetilde{Y}_k = 0 \mid Y_k = 1) &\;=\; \pi_{\mathrm{FN}}(x_k,y_k), \\
    \Pr(\widetilde{Y}_k = 1 \mid Y_k = 0) &\;=\; \pi_{\mathrm{FP}}(x_k,y_k),
\end{align}
where $\pi_{\mathrm{FN}}$ and $\pi_{\mathrm{FP}}$ encode false negative and false positive rates, respectively. In our construction:
\begin{itemize}
    \item The false negative probability $\pi_{\mathrm{FN}}(x,y)$ is elevated near the lower boundary ($y < 0.2$), reflecting difficulty in classifying galaxies in that region.
    \item The false positive probability $\pi_{\mathrm{FP}}(x,y)$ increases both with $x$ and near the contaminant hotspot at $(0.75,0.25)$, mimicking structured confusion between galaxies and contaminants.
\end{itemize}

\paragraph*{Spectroscopic subset.} Finally, a subset of size $m_{\text{labels}}$ is drawn uniformly at random from the full catalog to represent spectroscopic ground truth, yielding a mask of labeled versus unlabeled points.

The resulting synthetic field has four key features of astronomical survey data: (\emph{i}) galaxies form clustered spatial structures well captured by a Thomas process, (\emph{ii}) contaminants arise from a structured, inhomogeneous background, (\emph{iii}) observed labels are corrupted by spatially dependent misclassification, and (\emph{iv}) only a partial spectroscopic subset is available for supervised learning. This synthetic construction provides a controlled environment for testing classification and decontamination methods under realistic conditions.

\paragraph*{Spatial structure of contamination and classification errors.}
Figure~\ref{fig:contamination_error_fields} illustrates the spatial structure of the contaminant intensity field and the induced patterns of classification error used in our simulations. These fields are designed to introduce realistic, spatially varying systematics that affect both the true source population and the quality of the observed labels. The left panel shows the contaminant intensity field, which combines a smooth large-scale gradient across the survey footprint with a localized overdensity. Contaminant sources are drawn from an inhomogeneous Poisson process whose intensity varies spatially according to this field, producing position-dependent contamination fractions. The middle panel displays the spatial density of false positives, corresponding to contaminant sources that are misclassified as signal. This field closely traces regions of high contaminant intensity, with the highest false-positive rates occurring in the localized overdensity. The right panel shows the spatial density of false negatives, corresponding to true signal sources that are misclassified as contaminants. In contrast to the false positives, the false-negative density exhibits a distinct spatial pattern, with enhanced misclassification concentrated in a band-like region across the field.

\begin{figure*}[th!]
    \centering
    \includegraphics[width=0.98\linewidth]{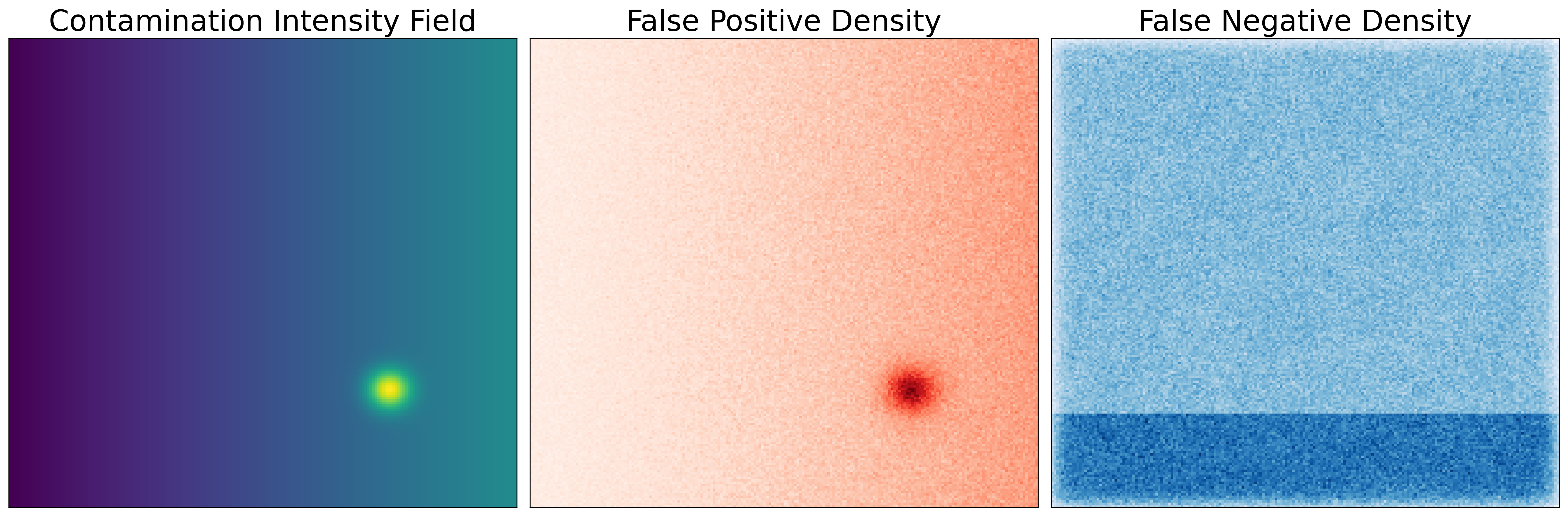}
    \caption{
    Spatial structure of contamination and classification errors used in the simulations. \textit{Left:} contaminant intensity field defining the inhomogeneous Poisson process from which contaminant sources are drawn, including a smooth gradient and a localized overdensity. \textit{Middle:} spatial density of false positives (contaminants misclassified as signal). \textit{Right:} spatial density of false negatives (signal sources misclassified as contaminants). The spatial correlation between these fields introduces realistic, position-dependent classification errors.}
    \label{fig:contamination_error_fields}
\end{figure*}

\section{Cross-Correlation Decontamination Estimator}
\label{app:ccd-derivation}

A central challenge in galaxy surveys is to recover the true clustering signal of galaxies when observed catalogs are contaminated by non-galaxies (e.g.\ stars or spurious detections). The \emph{cross-correlation estimator} provides a method to decontaminate clustering statistics by leveraging (i) a nearly pure contaminant catalog and (ii) estimates of the sample purity from a labeled subset.

Let $A$ denote the observed catalog of objects labeled as galaxies by the noisy classifier. This catalog contains a mixture of true galaxies ($G$) and contaminants ($C$). Following the contamination model of \citep{pullen2016interloper}, we write the observed fraction of true galaxies among $A$ as
$
    a = \frac{|A \cap G|}{|A|},
$
which we call the \emph{purity}. The purity $a$ is not known exactly but can be estimated, denoted $\widehat{a}$, from a spectroscopic subset with ground-truth labels. If known then the decontamination methods such as \cite{awan2020angular} can be applied. 

We are interested in recovering the true galaxy--galaxy correlation function, $\xi_{GG}(r)$, at separation scale $r$. Following \citep{foroozan2022correcting}, the observed two-point functions involve mixtures
\begin{align*}
    \xi_{AA}(r) &= a^2 \xi_{GG}(r) + 2a(1-a)\xi_{GC}(r) + (1-a)^2 \xi_{CC}(r), \\
    \xi_{AC}(r) &= a \, \xi_{GC}(r) + (1-a) \xi_{CC}(r),
\end{align*}
where
\begin{itemize}
    \item $\xi_{AA}(r)$ is the autocorrelation of the observed ``galaxy'' sample $A$,
    \item $\xi_{CC}(r)$ is the autocorrelation of a (near-)pure contaminant catalog $C$,
    \item $\xi_{AC}(r)$ is the cross-correlation between $A$ and the pure contaminant sample,
    \item $\xi_{GC}(r)$ is the unknown galaxy--contaminant cross-correlation.
\end{itemize}

From the second equation, we can isolate $\xi_{GC}(r)$:
\begin{equation}
    \xi_{GC}(r) \;=\; \frac{\xi_{AC}(r) - (1-a) \xi_{CC}(r)}{a}.    
\end{equation}
Substituting this into the first equation yields
\begin{equation}
    \xi_{GG}(r) \;=\; \frac{ \xi_{AA}(r) - 2a(1-a) \xi_{GC}(r) - (1-a)^2 \xi_{CC}(r) }{a^2}.
\end{equation}
Thus, given empirical estimates of $\xi_{AA}$, $\xi_{AC}$, and $\xi_{CC}$, along with an estimate of $a$, we obtain an estimator
\begin{equation}
    \widehat{\xi}_{GG}(r) \;=\; \frac{ \widehat{\xi}_{AA}(r) - 2\widehat{a}(1-\widehat{a}) \widehat{\xi}_{GC}(r) - (1-\widehat{a})^2 \widehat{\xi}_{CC}(r)}{\widehat{a}^2},
\end{equation}
with
\begin{equation}
    \widehat{\xi}_{GC}(r) \;=\; \frac{ \widehat{\xi}_{AC}(r) - (1-\widehat{a})\widehat{\xi}_{CC}(r)}{\widehat{a}}.
\end{equation}

In practice, the correlation functions $\widehat{\xi}_{AA}, \widehat{\xi}_{AC}, \widehat{w}_{CC}$ are estimated using the LS estimator with random catalogs. Purity $\widehat{a}$ is computed from the labeled subset. If labeled objects exist in $A$, then $\widehat{a}$ is simply the fraction of true galaxies among them; otherwise, it is derived from the overlap of labeled true and observed galaxies.

The cross-correlation decontamination method provides a principled way to recover $\xi_{GG}$ in the presence of classification noise and structured contaminants. Its key ingredients are:
\begin{enumerate}
    \item A nearly pure contaminant catalog for estimating $\xi_{CC}$ and $\xi_{AC}$,
    \item A labeled subset for estimating the purity $\widehat{a}$,
    \item Standard correlation estimators (e.g.\ LS) applied consistently across catalogs.
\end{enumerate}
By explicitly solving the mixture equations for $\xi_{GG}$, the method corrects for both contamination fraction and galaxy--contaminant correlations, yielding an unbiased estimator of the true clustering signal.

\section{Derivation of the Landy--Szalay Estimator}\label{app:LS_derivation}

This appendix derives the expectation of the LS estimator in a way that cleanly separates (i) the \emph{geometry/selection} encoded by the survey window and weights from (ii) the \emph{clustering} encoded by the two-point correlation function $\xi$.

\paragraph*{Point process, selection, and random catalog.} Let $x\in\mathcal{V}\subset\mathbb{R}^3$ denote position and let $\varphi(x)\in[0,1]$ denote the (possibly spatially varying) survey window/selection function.
We consider a \emph{galaxy} point process with first-order intensity
\begin{equation}
\lambda(x)=\bar n\,\varphi(x),
\end{equation}
where $\bar n$ is a reference mean density (e.g.\ the mean density before masking).

We assume that, conditional on the window/selection, the process has second-order product density of the usual form
\begin{equation}\label{eq:lambda2}
\lambda^{(2)}(x_1,x_2)
=
\lambda(x_1)\lambda(x_2)\bigl[1+\xi(r_{12})\bigr],\ r_{12}:=\|x_1-x_2\|.
\end{equation}
This is equivalent to Equation~\eqref{eq:xi-prob} in standard treatments: $\xi(r)$ measures departures from Poisson via the pair-product density.

Let $R$ be an independent Poisson process on $\mathcal{V}$ with intensity
\begin{equation}\label{eq:random_intensity}
\lambda_R(x)=\bar n_R\,\varphi_R(x),
\end{equation}
where $\bar n_R$ is an arbitrary constant and $\varphi_R$ is the selection function used to generate randoms.
In the ideal LS construction, $\varphi_R$ is chosen to match the \emph{mean selection of the data sample being correlated}. In the simplest case (no additional selection modulation beyond $\varphi$), one takes $\varphi_R=\varphi$.

Let $K_b(r)$ denote the binning kernel for separation bin $b$ (e.g. a top-hat indicator), and let $W(x_1,x_2)\ge 0$ and $W^R(x_1,x_2)\ge 0$ be symmetric pair-weight functions used for $DD$ and for $DR/RR$, respectively.
Throughout, we assume the same $K_b$ is used for $DD$, $DR$, and $RR$.

Define the \emph{normalized} pair-count histograms by
\begin{align}
DD(b) &:= \frac{\sum_{i<j\in D} W(x_i,x_j)\,K_b(r_{ij})}
{\sum_{i<j\in D} W(x_i,x_j)},
\\[3pt]
DR(b) &:= \frac{\sum_{i\in D}\sum_{a\in R} W^R(x_i,x_a)\,K_b(r_{ia})}{\sum_{i\in D}\sum_{a\in R} W^R(x_i,x_a)},
\\[3pt]
RR(b) &:= \frac{\sum_{a<b\in R} W^R(x_a,x_b)\,K_b(r_{ab})}{\sum_{a<b\in R} W^R(x_a,x_b)}.
\end{align}
The LS estimator is then
\begin{equation}\label{eq:LS_def}
\widehat\xi_{\rm LS}(b) = \frac{DD(b)-2DR(b)+RR(b)}{RR(b)}.
\end{equation}

We now compute the expectations of these normalized histograms in the large-count (or large-random) regime. Throughout, we use Campbell's theorem and ignore the common $1/2$ factor arising from unordered pairs, since it cancels in ratios.

\paragraph*{Expectation of $DD(b)$.} By Campbell's theorem \citep{daley2003introduction} for second-order sums of a point process,
\begin{widetext}
\begin{align}
\mathbb{E}\!\left[\sum_{i<j\in D} W_{ij}K_b(r_{ij})\right]
&=
\frac12\iint \lambda^{(2)}(x_1,x_2)\,W(x_1,x_2)\,K_b(r_{12})\,dx_1dx_2,
\\
\mathbb{E}\!\left[\sum_{i<j\in D} W_{ij}\right]
&=
\frac12\iint \lambda^{(2)}(x_1,x_2)\,W(x_1,x_2)\,dx_1dx_2.
\end{align}
\end{widetext}
Substituting~\eqref{eq:lambda2} and canceling constants gives the exact ratio form
\begin{equation}\label{eq:EDD_exact_ratio}
\mathbb{E}[DD(b)]
=
\frac{\iint \lambda_1\lambda_2\,[1+\xi(r_{12})]\,W_{12}\,K_b(r_{12})\,dx_1dx_2}
{\iint \lambda_1\lambda_2\,[1+\xi(r_{12})]\,W_{12}\,dx_1dx_2},
\end{equation}
where $\lambda_i:=\lambda(x_i)$ and $W_{12}:=W(x_1,x_2)$.

It is often useful to split the numerator into the ``Poisson baseline'' plus a ``clustering excess'' term
\begin{widetext}
\begin{equation}\label{eq:EDD_split}
\mathbb{E}[DD(b)] = \frac{\iint \lambda_1\lambda_2\,W_{12}\,K_b(r_{12})\,dx_1dx_2}
{\iint \lambda_1\lambda_2\,[1+\xi(r_{12})]\,W_{12}\,dx_1dx_2} + \frac{\iint \lambda_1\lambda_2\,\xi(r_{12})\,W_{12}\,K_b(r_{12})\,dx_1dx_2}
{\iint \lambda_1\lambda_2\,[1+\xi(r_{12})]\,W_{12}\,dx_1dx_2}.
\end{equation}
\end{widetext}
The key point is that $DD$ depends on both geometry (through $\lambda$ and $W$) and clustering (through $\xi$).

\paragraph*{Expectation of $DR(b)$.} Because $D$ and $R$ are independent, the cross-sum factorizes at first order:
\begin{widetext}
\begin{align}
\mathbb{E}\!\left[\sum_{i\in D}\sum_{a\in R} W^R_{ia}K_b(r_{ia})\right]
&=
\iint \lambda(x_1)\lambda_R(x_2)\,W^R(x_1,x_2)\,K_b(r_{12})\,dx_1dx_2,
\\
\mathbb{E}\!\left[\sum_{i\in D}\sum_{a\in R} W^R_{ia}\right]
&=
\iint \lambda(x_1)\lambda_R(x_2)\,W^R(x_1,x_2)\,dx_1dx_2.
\end{align}
\end{widetext}
Thus,
\begin{equation}\label{eq:EDR}
\mathbb{E}[DR(b)] = \frac{\iint \lambda_1\lambda_{R,2}\,W^R_{12}\,K_b(r_{12})\,dx_1dx_2}{\iint \lambda_1\lambda_{R,2}\,W^R_{12}\,dx_1dx_2},
\end{equation}
where $\lambda_{R,2}:=\lambda_R(x_2)$.

Importantly, $\mathbb{E}[DR(b)]$ does \emph{not} contain $\xi$: it depends only on the first-order intensities and the geometry/weights.

\paragraph*{Expectation of $RR(b)$.} For the Poisson random catalog $R$,
\begin{widetext}
\begin{align}
\mathbb{E}\!\left[\sum_{a<b\in R} W^R_{ab}K_b(r_{ab})\right] &= \frac12\iint \lambda_R(x_1)\lambda_R(x_2)\,W^R(x_1,x_2)\,K_b(r_{12})\,dx_1dx_2,
\\
\mathbb{E}\!\left[\sum_{a<b\in R} W^R_{ab}\right] &= \frac12\iint \lambda_R(x_1)\lambda_R(x_2)\,W^R(x_1,x_2)\,dx_1dx_2,
\end{align}
\end{widetext}
so
\begin{equation}\label{eq:ERR}
\mathbb{E}[RR(b)] = \frac{\iint \lambda_{R,1}\lambda_{R,2}\,W^R_{12}\,K_b(r_{12})\,dx_1dx_2}
{\iint \lambda_{R,1}\lambda_{R,2}\,W^R_{12}\,dx_1dx_2}.
\end{equation}

Again, $RR$ is purely geometric.

When $\lambda_R(x)$ is proportional to $\lambda(x)$ on the support of the survey,
\begin{equation}\label{eq:random_matches_data}
\lambda_R(x) = c\,\lambda(x)\qquad\text{for some constant }c>0,
\end{equation}
then substituting into Equation~\eqref{eq:EDR} and Equation~\eqref{eq:ERR} shows that the constant cancels and
\begin{equation}\label{eq:DR_equals_RR}
\mathbb{E}[DR(b)] = \mathbb{E}[RR(b)] \;=: q_b.
\end{equation}
We note that $\lambda_R(x) = c\,\lambda(x)$ holds when the survey selection $\varphi(x)$ and random selection $\varphi_R(x)$ are equal. A key implication is that if the data sample has additional spatial modulation beyond the mask (e.g.\ due to varying completeness or target selection), that modulation must be encoded in the randoms as well.

Define $q_b:=\mathbb{E}[RR(b)]$. In the large-random limit, $RR(b)$ concentrates around $q_b$ and similarly $DR(b)$ concentrates around $\mathbb{E}[DR(b)]$.
Using a standard continuous-mapping/Slutsky approximation (ratio of concentrated quantities), we obtain
\begin{equation}\label{eq:ELS_approx}
\mathbb{E}[\widehat\xi_{\rm LS}(b)]
\approx
\frac{\mathbb{E}[DD(b)]-2\,\mathbb{E}[DR(b)]+q_b}{q_b}.
\end{equation}

Under the matching condition Equation~\eqref{eq:random_matches_data}, so that Equation~\eqref{eq:DR_equals_RR} holds, the purely geometric baseline cancels
\begin{equation}\label{eq:ELS_cancel}
\mathbb{E}[\widehat\xi_{\rm LS}(b)]
\approx
\frac{\mathbb{E}[DD(b)]-q_b}{q_b}.
\end{equation}

To interpret the remaining term, consider the regime where $\xi$ is not so large that it materially affects the normalization in Equation~\eqref{eq:EDD_exact_ratio}. This is the usual large-scale regime, and is the approximation implicitly used in most LS derivations (see Appendix~\ref{app:approximation_xi}). Replacing the denominator in Equation~\eqref{eq:EDD_exact_ratio} by its $\xi=0$ counterpart yields the standard first-order form
\begin{widetext}    
\begin{equation}\label{eq:EDD_firstorder}
\mathbb{E}[DD(b)]
\approx
\frac{\iint \lambda_1\lambda_2\,W_{12}\,K_b(r_{12})\,dx_1dx_2}
{\iint \lambda_1\lambda_2\,W_{12}\,dx_1dx_2}
+
\frac{\iint \lambda_1\lambda_2\,\xi(r_{12})\,W_{12}\,K_b(r_{12})\,dx_1dx_2}
{\iint \lambda_1\lambda_2\,W_{12}\,dx_1dx_2}.
\end{equation}
\end{widetext}
The first term is precisely the geometric fraction of pairs in bin $b$ under the (weighted) selection, i.e.\ it matches $q_b$ when $W^R=W$ and $\lambda_R\propto\lambda$.
Hence, under (i) randoms matching the data selection, (ii) consistent pair weighting $W^R=W$, and (iii) the mild first-order normalization approximation, we obtain
\begin{equation}\label{eq:ELS_binavg}
\mathbb{E}[\widehat\xi_{\rm LS}(b)]
\approx
\frac{\iint \lambda_1\lambda_2\,\xi(r_{12})\,W_{12}\,K_b(r_{12})\,dx_1dx_2}
{\iint \lambda_1\lambda_2\,W_{12}\,K_b(r_{12})\,dx_1dx_2}
\;=:\;\overline{\xi}_b.
\end{equation}
That is, LS returns a \emph{pair-weighted bin average} of $\xi$ over the kernel $K_b$ and the effective selection $\lambda$.
When $K_b$ is narrow and $\xi(r)$ varies slowly across the bin, $\overline{\xi}_b\approx \xi(r_b)$ for a representative separation $r_b$.

\section{Exact expression and a controlled approximation} \label{app:approximation_xi}

Starting from Campbell's theorem we obtain the exact expectation of the normalized $DD$ histogram:
\begin{equation}\label{eq:DD_exact}
\mathbb{E}[DD(b)]
=
\frac{A_b + B_b}{C + D},
\end{equation}
where we have defined the four integrals
\begin{align}
A_b &:= \iint \lambda_1\lambda_2\,W_{12}\,K_b(r_{12})\,dx_1dx_2, \label{eq:A_def}\\
B_b &:= \iint \lambda_1\lambda_2\,\xi(r_{12})\,W_{12}\,K_b(r_{12})\,dx_1dx_2, \label{eq:B_def}\\
C   &:= \iint \lambda_1\lambda_2\,W_{12}\,dx_1dx_2, \label{eq:C_def}\\
D   &:= \iint \lambda_1\lambda_2\,\xi(r_{12})\,W_{12}\,dx_1dx_2. \label{eq:D_def}
\end{align}
Here $\lambda_i=\lambda(x_i)$ and $W_{12}=W(x_1,x_2)$ for brevity. Equation~\eqref{eq:DD_exact} is algebraically identical to the split you wrote (Equation~\eqref{eq:EDD_split}) since
\[
\frac{A_b+B_b}{C+D}=\frac{A_b}{C+D}+\frac{B_b}{C+D}.
\]

Write
\begin{equation}\label{eq:ratio_expansion}
\frac{A_b+B_b}{C+D}
=
\frac{A_b}{C} + \frac{B_b}{C} - \frac{A_b\,D}{C^2} + \mathcal{O}\!\bigl((D/C)^2\bigr).
\end{equation}
Thus the leading clustering contribution is $B_b/C$ (the usual numerator-with-$\xi$ divided by the geometric normalization), while the correction term $A_b D/C^2$ is of order $D/C$ relative to the baseline $A_b/C$. If the total clustering contribution to the \emph{denominator} (measured by $D$) is small compared to the geometric denominator $C$, then dropping $D$ is a controlled first-order approximation.

Two important regimes justify $D/C\ll1$:
\begin{itemize}
\item \textbf{Weak clustering on normalization scales.} $\xi(r)$ is appreciable only on small separations and the denominator integrates over pair separations more broadly, so $D$ is typically small compared to $C$. This is often true when using moderately broad pair weights or when the footprint averages many separations.
\item \textbf{Large survey / many pairs.} $C$ scales with the total number of possible pairs (geometry), while $D$ is weighted by $\xi$ and does not grow as fast; in large-area surveys the geometric term dominates.
\end{itemize}

\bibliography{biblio}

\end{document}